\newcommand{\eps}{\varepsilon}
\newcommand{\diam}{\text{diam}}
\newcommand{\dir}{\text{dir}}
\newcommand{\MST}{\text{MST}}
\newcommand{\opt}{\text{OPT}}
\newcommand{\alg}{\text{ALG}}
\newcommand{\N}{\mathbb{N}}
\newcommand{\R}{\mathbb{R}}
\newcommand{\Z}{\mathbb{Z}}
\newcommand{\sphere}{\mathbb{S}}
\title{Online Euclidean Spanners}
\titlerunning{Online Euclidean Spanners}
\author{Sujoy Bhore}{Indian Institute of Science Education and Research, Bhopal, India.}{sujoy.bhore@gmail.com}{0000-0003-0104-1659}{}
\author{Csaba D. T\'oth}{California State University Northridge, Los Angeles, CA; and Tufts University, Medford, MA, USA.}{csaba.toth@csun.edu}{0000-0002-8769-3190}
{}
\authorrunning{S.~Bhore and C.~D.~T\'oth}
\keywords{Geometric spanner, $(1+\eps)$-spanner, 
minimum weight, online algorithm}
\begin{document}

\maketitle

\begin{abstract}
In this paper, we study the online Euclidean spanners problem for points in $\mathbb{R}^d$. Given a set $S$ of $n$ points in $\mathbb{R}^d$, a $t$-spanner on $S$ is a subgraph of the underlying complete graph $G=(S,\binom{S}{2})$, that preserves the pairwise Euclidean distances between points in $S$ to within a factor of $t$, that is the \emph{stretch factor}. Suppose we are given a sequence of $n$ points $(s_1,s_2,\ldots, s_n)$ in $\R^d$, where point $s_i$ is presented in step~$i$ for $i=1,\ldots, n$. The objective of an online algorithm is to maintain a geometric $t$-spanner on $S_i=\{s_1,\ldots, s_i\}$ for each step~$i$. The algorithm is allowed to \emph{add} new edges to the spanner when a new point is presented, but cannot \emph{remove} any edge from the spanner. The performance of an online algorithm is measured by its competitive ratio, which is the supremum, over all sequences of points, of the ratio between the weight of the spanner constructed by the algorithm and the weight of an optimum spanner. Here the weight of a spanner is the sum of all edge weights.

First, we establish a lower bound of $\Omega(\eps^{-1}\log n / \log \eps^{-1})$ for the competitive ratio of any online $(1+\eps)$-spanner algorithm, for a sequence of $n$ points in 1-dimension. We show that this bound is tight, and there is an online algorithm that can maintain a $(1+\eps)$-spanner with competitive ratio $O(\eps^{-1}\log n / \log \eps^{-1})$. Next, we design online algorithms for sequences of points in $\R^d$, for any constant $d\ge 2$, under the $L_2$ norm.
We show that previously known incremental algorithms achieve a competitive ratio $O(\eps^{-(d+1)}\log n)$. However, if the algorithm is allowed to use additional points (Steiner points), then it is possible to substantially improve the competitive ratio in terms of $\eps$. We describe an online Steiner $(1+\eps)$-spanner algorithm with competitive ratio $O(\eps^{(1-d)/2} \log n)$.
As a counterpart, we show that the dependence on $n$ cannot be eliminated in dimensions $d \ge 2$. In particular, we prove that any online spanner algorithm for a sequence of $n$ points in $\mathbb{R}^d$ under the $L_2$ norm has competitive ratio $\Omega(f(n))$, where
$\lim_{n\rightarrow \infty}f(n)=\infty$.
Finally, we provide improved lower bounds under the $L_1$ norm: $\Omega(\eps^{-2}/\log \eps^{-1})$ in the plane and $\Omega(\eps^{-d})$ in $\mathbb{R}^d$ for $d\geq 3$.

\end{abstract}



\section{Introduction}\label{sec:intro}
We study the online Euclidean spanners problem for a set of points in $\mathbb{R}^d$. Let $S$ be a set of $n$ points in $\mathbb{R}^d$. A $t$-spanner for a finite set $S$ of points in $\mathbb{R}^d$ is a subgraph of the underlying complete graph $G=(S,\binom{S}{2})$, that preserves the pairwise Euclidean distances between points in $S$ to within a factor of $t$, that is the \emph{stretch factor}. The edge weights of $G$ are the Euclidean distances between the vertices. Chew~\cite{Chew86, Chew89} initiated the study of Euclidean spanners in 1986, and showed that for a set of $n$ points in $\mathbb{R}^2$, there exists a spanner with $O(n)$ edges and constant stretch factor. Since then a large body of research has been devoted to Euclidean spanners due to its vast applications across domains, such as, topology control in wireless networks~\cite{schindelhauer2007geometric},
efficient regression in metric spaces~\cite{gottlieb2017efficient},
approximate distance oracles~\cite{gudmundsson2008approximate}, and many others. Moreover, Rao and Smith~\cite{rao1998approximating} showed the relevance of Euclidean spanners in the context of other fundamental geometric \textsf{NP}-hard problems, e.g., Euclidean traveling salesman problem and Euclidean minimum Steiner tree problem. Many different spanner construction approaches have been developed for Euclidean spanners over the years, that each found further applications in geometric optimization, such as spanners based on
well-separated pair decomposition (WSPD)~\cite{callahan1993optimal, GudmundssonLN02},
skip-lists~\cite{arya1994randomized},
path-greedy and gap-greedy approaches~\cite{althofer1993sparse, arya1997efficient}, locality-sensitive orderings~\cite{ChanHJ20}, and more. We refer to the book by Narasimhan and Smid~\cite{narasimhan2007geometric} and the survey of Bose and Smid~\cite{BoseS13} for a summary of results and techniques on Euclidean spanners up to 2013.

\smallskip\noindent\textbf{Online Spanners.} We are given a sequence of $n$ points $(s_1,s_2,\ldots , s_n)$, where the points are presented one-by-one, i.e., point $s_i$ is revealed at the step~$i$, and $S_i=\{s_1,\ldots ,  s_i\}$ for $i=1,\ldots ,n$. The objective of an online algorithm is to maintain a geometric $t$-spanner $G_i$ for $S_i$ for all $i$. Importantly, the algorithm is allowed to \emph{add} edges to the spanner when a new point arrives, however is not allowed to \emph{remove} any edge from the spanner.

The performance of an online algorithm $\alg$ is measured by comparing it to the offline optimum $\opt$ using the standard notion of competitive ratio~\cite[Ch.~1]{BY98}.
The \emph{competitive ratio} of an online $t$-spanner algorithm $\alg$ is defined as $\sup_\sigma \frac{\alg(\sigma)}{\opt(\sigma)}$,
where the supremum is taken over all input sequences $\sigma$, $\opt(\sigma)$ is the minimum weight of a $t$-spanner for $\sigma$, and $\alg(\sigma)$ denotes the weight of the $t$-spanner produced by $\alg$ for this input.


Computing a $(1+\eps)$-spanner of minimum weight for a set $S$ in Euclidean plane
is known to be \textsf{NP}-hard~\cite{carmi2013minimum}.
However, there exists a plethora of constant-factor
approximation algorithms for this problem in the offline model; see~\cite{althofer1993sparse,das1993optimally, narasimhan1995new,rao1998approximating}. Most of these algorithms approximate the parameter
\emph{lightness} (the ratio of the spanner weight to the weight of the Euclidean minimum spanning tree $\MST(S)$) of Euclidean spanners, which in turn also approximates the optimum weight of the spanner.
%
%
We refer to Section~\ref{intro-relatedwork} for a more detailed overview of the parameter lightness.

\emph{Minimum spanning trees (\MST)} on $n$ points in a metric space, which have no guarantee on the stretch factor, have been studied in the online model. It is not difficult to show that a greedy algorithm achieves a competitive ratio $\Theta(\log n)$. The online Steiner tree problem was studied by Imase and Waxman~\cite{imase1991dynamic}, who proved $\Theta(\log n)$-competitiveness for the problem. Later, Alon and Azar~\cite{AlonA93} studied minimum Steiner trees for points in the Euclidean plane, and proved a lower bound $\Omega(\log n/\log \log n)$ for the competitive ratio. Their result was the first to analyse the impact of Steiner points on a geometric network problem in the online setting. Several algorithms were proposed over the years for the online Steiner Tree and Steiner forest problems, on graphs in both weighted and unweighted settings; see~\cite{alon2006general, awerbuch2004line,  berman1997line, hajiaghayi2013online, naor2011online}.

\smallskip\noindent\textbf{Online Steiner Spanners.} An important variant of online spanners is when it is allowed to use auxiliary points (Steiner points) which are not part of input sequence of points. It turns out that Steiner points allow for substantial improvements over the bounds on the sparsity and lightness of Euclidean spanners in the offline settings; see~\cite{BT-lessp-21, BT-oess-21, le2019truly, le2020light}.
In the geometric setting, an online algorithm is allowed to \emph{add} Steiner points and \emph{subdivide} existing edges with Steiner points at each time step.
(This modeling decision has twofold justification: It accurately models physical networks such as roads, canals, or power lines, and from the theoretical perspective, it is hard to tell whether an online algorithm introduced a large number of Steiner points when it created an edge/path in the first place).
However, the spanner must achieve the given stretch factor only for the input point pairs.

It is easy to see that in this model the online spanners in $1$-dimension could attain optimum competitive ratio. However, it is unclear how it extends to higher dimensions as it has been observed in the offline settings that it tends to be more difficult to achieve tight bounds for Steiner spanners than their non-Steiner counterparts.
%

When the optimal Steiner spanner is lighter than $\opt(S_i)$ without Steiner points, the adversary may decrease $\opt(S_i)$ by adding suitable Steiner vertices to $S_i$; see Fig.~\ref{fig:online-intro}. In particular, $\opt(S_i)$ may or may not increase with $i$ in the model without Steiner points, but $\opt(S_i)$ monotonically increases in $i$ when Steiner points are allowed.

\begin{figure}
    \centering
    \includegraphics[width=0.5\textwidth]{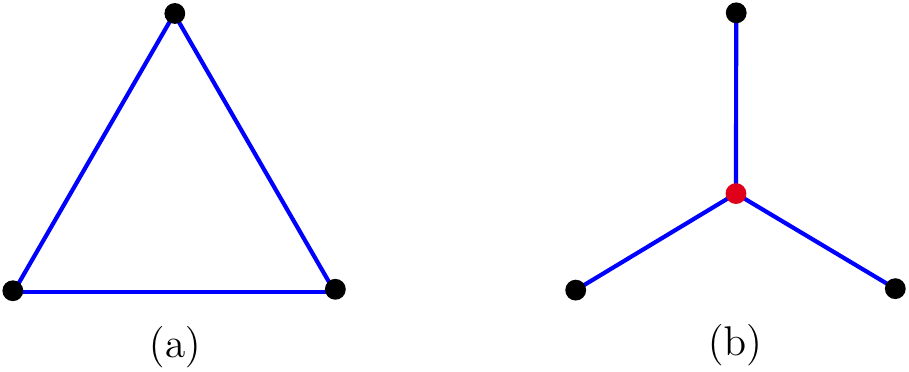}
    \caption{(a) An optimum $\frac32$-spanner on three points with all edges of unit length.
    (b) After inserting a fourth point at the center, the weight of the optimum $\frac32$-spanner decreases.}
    \label{fig:online-intro}
\end{figure}

\subsection{Related Work}\label{intro-relatedwork}

\smallskip\noindent\textbf{Dynamic Spanners.}
In applications, the data (modeled as points in $\mathbb{R}^d$) changes over time, as new cities emerge, new wireless antennas are built, and users turn their wireless devices on or off. \emph{Dynamic} models aim to maintain a geometric $t$-spanners for a dynamically changing point set $S$; in a restricted \emph{insert-only} model, the input consists of a sequence of point insertions.
In the dynamic model, the objective is design algorithms and data structures that minimize the worst-case update time needed to maintain a $t$-spanner for $S$ over all steps, regardless of its weight, sparsity, or lightness. Notice that dynamic algorithms are allowed to add or delete edges in each step, while online algorithms cannot delete edges. However, if a dynamic (or dynamic insert-only) algorithm always adds edges for a sequence of points insertions, it is also an online algorithm, and one can analyze its competitive ratio.

Arya et al.~\cite{arya1994randomized} designed a randomized incremental algorithm for $n$ points in $\R^d$, where the points are inserted in a random order, and maintains a $t$-spanner of $O(n)$ size and $O(\log n)$ diameter. Their algorithm can also handle random insertions and deletions in $O(\log^d n \log \log n)$ expected amortized update time.
%
Later, Bose et al.~\cite{bose2004ordered} presented an insert-only algorithm
to maintain a $t$-spanner of $O(n)$ size and $O(\log n)$ diameter in $\R^d$.
Fischer and Har-Peled~\cite{FischerH05} used dynamic compressed quadtrees to maintain a WSPD-based $(1+\eps)$-spanner for $n$ points in $\R^d$ in expected $O([\log n+\log \eps^{-1}]\,\eps^{-d}\log n)$ update time. Their algorithm works under the online model, too, however, they have not analyzed the weight of the resulting spanner.
Gao et al.~\cite{gao2006deformable} used hierarchical clustering for dynamic
spanners in $\R^d$. Their {\sc DefSpanner} algorithm is fully dynamic with $O(\log \Delta)$ update time, where $\Delta$ is the spread\footnote{The \emph{spread} of a finite set $S$ in a metric space is the ratio of the maximum pairwise distance to the minimum pairwise distance of points in $S$; and $\log \Delta\geq \Omega(\log n)$ in doubling dimensions.} of the set $S$.
They maintain a $(1+\eps)$-spanner of weight $O(\eps^{-(d+1)}\|MST(S)\|\log \Delta)$, and for a sequence of point insertions, {\sc DefSpanner} only adds edges. As $\opt\geq \|MST(S)\|$,  {\sc DefSpanner} can serve as an online algorithm with competitive ratio $O(\eps^{-(d+1)}\log \Delta)$.

Gottlieb and Roditty~\cite{gottlieb2008optimal} studied dynamic spanners in more general settings. For every set of $n$ points in a metric space of bounded doubling dimension\footnote{A metric is said to be of a \emph{constant doubling dimension} if a ball with radius $r$ can be covered by at most a constant number of balls of radius $r/2$.}, they constructed a $(1+\eps)$-spanner whose maximum degree is
$O(1)$ and that can be maintained under insertions and deletions in $O(\log n)$ amortized update time per operation.
Later, Roditty~\cite{Roditty12} designed fully dynamic geometric $t$-spanners with optimal $O(\log n)$ update time for $n$ points in $\mathbb{R}^d$. Very recently, Chan et al.~\cite{ChanHJ20} introduced \emph{locality sensitive orderings} in $\R^d$, which has applications in several proximity problems, including spanners. They obtained a fully dynamic data structure for maintaining a
$(1 + \eps)$-spanners in Euclidean space with logarithmic update time and linearly many edges. However, the spanner weight has not been analyzed for any of these constructions. Dynamic spanners have been subject to investigation in abstract graphs, as well. See~\cite{BaswanaKS12, BergamaschiHGWW21, BernsteinFH19} for some recent progress on dynamic graph spanners.

\smallskip\noindent\textbf{Lightness and sparsity} are two natural parameters for Euclidean spanners.
For a set $S$ of points in $\mathbb{R}^d$, the lightness is the ratio of the spanner weight (i.e., the sum of all edge weights) to the weight of the Euclidean minimum spanning tree $MST(S)$.
It is known that \emph{greedy-spanner} (\cite{althofer1993sparse}) has constant lightness; see~\cite{das1993optimally, narasimhan1995new}. Later, Rao and Smith~\cite{rao1998approximating} in their seminal work, showed that the greedy spanner has lightness $\eps^{-O(d)}$ in $\mathbb{R}^d$ for every constant $d$, and asked what is the best possible constant in the exponent.
Then, the \emph{sparsity} of a spanner on $S$ is the ratio of its size to the size of a spanning tree. Classical results~\cite{Chew89, Clarkson87, keil1988approximating, yao1982constructing}
show that when the dimension $d\in \mathbb{N}$ and $\varepsilon > 0$ are
constant, every set $S$ of $n$ points in $d$-space admits an $(1 +\varepsilon)$-spanners with $O(n)$ edges and weight proportional to that of the Euclidean \textsc{MST} of $S$.

\smallskip\noindent\textbf{Dependence on $\eps>0$ for constant dimension $d$.}
The dependence of the lightness and sparsity on $\eps>0$ for constant $d\in \mathbb{N}$ has been studied only recently. Le and Solomon~\cite{le2019truly} constructed, for every $\eps>0$ and constant $d\in \mathbb{N}$, a set $S$ of $n$ points in $\mathbb{R}^d$ for which any $(1+\eps)$-spanner must have lightness $\Omega(\eps^{-d})$ and sparsity $\Omega(\eps^{-d+1})$, whenever $\eps = \Omega(n^{-1/(d-1)})$. Moreover, they showed that the greedy $(1+\eps)$-spanner in $\mathbb{R}^d$ has lightness $O(\eps^{-d}\log \eps^{-1})$. In fact, Le and Solomon~\cite{le2019truly} noticed that Steiner points can substantially improve the bound on the lightness and sparsity of an $(1+\eps)$-spanner.
For minimum sparsity, they gave an upper bound of $O(\eps^{(1-d)/2})$ for $d$-space and a lower bound of $\Omega(\eps^{-1/2}/\log\eps^{-1})$. For minimum lightness, they gave a lower bound of $\Omega(\eps^{-1}/\log\eps^{-1})$, for points in the plane ($d=2$)~\cite{le2019truly}. More recently, Bhore and T\'{o}th~\cite{BT-oess-21} established a lower bound of $\Omega(\eps^{-d/2})$ for the lightness of Steiner $(1+\eps)$-spanners in Euclidean $d$-space for all $d\ge 2$. Moreover, for points in the plane, they established an upper bound of $O(\eps^{-1})$~\cite{BT-lessp-21}.

\subsection{Our Contributions}

We present the main contributions of this paper, and sketch the key technical and conceptual ideas used for establishing these results. (Refer to the technical sections for precise definitions, complete proofs, and additional remarks.)

\smallskip\noindent\textbf{Points on a line.}
In Section~\ref{sec:1-D} (Theorem~\ref{thm:1D-bounds}),
we establish a lower bound $\Omega(\eps^{-1}\log n/\log \eps^{-1})$
for the competitive ratio of any online algorithm for a sequence of points on the real line. Moreover, we show that this bound is tight. We present an online algorithm that  maintains a $(1+\eps)$-spanner with competitive ratio $O(\eps^{-1}\log n/\log \eps^{-1})$.

Our online algorithm is a 1-dimensional instantiation of hierarchical clustering, which was used by Roditty~\cite{Roditty12} for dynamical spanners in doubling metrics. When a new point $s_i$ is ``close'' to a previous point $s_j$, we add $s_i$ to the ``cluster'' of $s_j$, otherwise we open a new cluster. The key question is to define when $s_i$ is ``close'' to a previous point. Instead of the closest points on the line, we find the shortest edge $pq$ that contains $s_i$ in the current spanner, and say that $s_i$ is ``close'' to $p$ (resp., $q$) if $\|ps_i\|\leq \frac{\eps}{4}\|pq\|$ (resp., $\|qs_i\|\leq \frac{\eps}{4}\|pq\|$). The algorithm (and its analysis), does not explicitly maintain ``clusters,'' though. It is easy to show, by induction, that $\alg$ maintains a $(1+\eps)$-spanner. The main contribution is a tight analysis of the competitive ratio. We partition the edges into \emph{buckets} by weight, where bucket $E_\ell$ contains edges $e$ of weight $\eps^{-(\ell+1)}<\|e\|\leq\eps^{-\ell}$. The edges of the spanner will form a laminar family (any edges are interior-disjoint or one contains the other); and the edge weight decay by factors of at most $(1-\frac{\eps}{4})$ along the descending paths in the containment poset.
Since $(1-\frac{\eps}{4})^{4/\eps}<\frac12$, we can show that the total weight of edges in a level decreases by a factor of $\frac12$ after every $\lceil 5/\eps\rceil$ levels. Thus, the sum of edge weights in a \emph{block} of $\lceil 5/\eps\rceil$ consecutive levels is $O(\eps^{-1}\opt)$. This bound, applied to $O(\log_{\eps^{-1}} n)=O(\log n/\log \eps^{-1})$ buckets,  proves the upper bound.
The lower bound construction matches the upper bound for each block of levels and for each bucket.

\smallskip\noindent\textbf{Euclidean $d$-space without Steiner points.}
In Section~\ref{sec:L2-upper}, we study the online Euclidean spanners for a sequence of points in $\mathbb{R}^d$. For constant $d\ge 2$ and parameter $\eps > 0$, we show that the dynamic algorithm by Fischer and Har-Peled achieves, in the online model, competitive ratio $O(\eps^{-(d+1)}\log n)$  for $n$ points in $\mathbb{R}^d$ (Theorem~\ref{thm:L2woSteiner} in Section~\ref{sec:L2-upper-woSteiner}), matching the competitive ratio of {\sc DefSpanner} by Gao et al.~\cite[Lemma~3.8]{gao2006deformable}.

The new competitive analysis of this algorithm is instrumental for extending the algorithm and its analysis to online Steiner $(1+\eps)$-spanners (see below). We briefly describe a key geometric insight. It is well known that for $a,b\in \R^d$, any $ab$-path of weight at most $(1+\eps)\|ab\|$ lies in an ellipsoid $B_{ab}$ with foci $a$ ans $b$ and great axes $(1+\eps)\|ab\|$. Summation over \emph{disjoint} ellipses gives a lower bound for $\opt$. Unfortunately, ellipsoids $B_{ab}$ for all pairs $ab\in S$ may heavily overlap.
Recently, Bhore and T\'oth~\cite[Lemma~3]{BT-oess-21} proved that any $ab$-path of weight at most $(1+\eps)\|ab\|$ must contain edge of total weight at least $\frac12 \|ab\|$ that are ``near-parallel'' to $ab$ (technically, they make an angle at most $\eps^{1/2}$ with $ab$); see Fig.~\ref{fig:quadtree}(right). By partitioning the edges of the unknown $\opt$ spanner by \emph{both} directions and disjoint ellipsoids, we obtain a bound of $\frac{\alg}{\opt}\leq O(\eps^{-(d+1)}\log n)$.

\smallskip\noindent\textbf{Euclidean $d$-space with Steiner points.}
When we are allowed to use Steiner points, we can substantially improve the competitive ratio in terms of $\eps$: We describe an algorithm with competitive ratio
$O(\eps^{(1-d)/2} \log n)$ (Theorem~\ref{thm:L2withSteiner2D} in Section~\ref{sec:L2-upper-withSteiner}).

The online Steiner algorithm adds a secondary layer to the non-Steiner algorithm: For each edge $ab$ of the non-Steiner spanner $G_1$, we maintain a path of weight $(1+\eps)\|ab\|$ with Steiner points; the stretch factor of the resulting Steiner spanner $G_2$ is $(1+\eps)^2<(1+3\eps)$. The key idea is to reduce the weight to maintain \emph{buckets} of edges of $G_1$ that have roughly the same direction and  weight, and are nearby locations; and we construct a common Steiner network $N$ for them. Importantly, we can construct a ``backbone'' of the network $N$ when the first edge $ab$ in a bucket arrives, and we have $\|N\|\leq O(\eps^{(1-d)/2}\|ab\|)$. When subsequent edges $a'b'$ in the same bucket arrive, then we can add relatively short ``connectors'' to $N$ so that it also contains an $a'b'$-path of weight at most $(1+\eps)\|a'b'\|$. Thus $N$ can easily accommodate new paths in the online model. The key technical tool for constructing Steiner networks $N$ (one for each bucket) is the so-called \emph{shallow-light trees}, introduced by Awerbuch et al.~\cite{AwerbuchBP90} and Khuller et al.~\cite{KhullerRY93}, and optimized in the geometric setting by Elkin and Solomon~\cite{elkin2015steiner,Solomon15}.

As a counterpart, we show (Theorem~\ref{thm:w-SP-LB} in Section~\ref{sec:L2-lower-withSteiner}) that the dependence on $n$ cannot be eliminated in dimensions $d \ge 2$. In particular, we prove that any $(1+\eps)$-spanner for a sequence of $n$ points in $\mathbb{R}^d$, has competitive ratio $\Omega(f(n))$ for some function $f(n)$ with
$\lim_{n\rightarrow \infty}f(n)=\infty$.
The lower bound construction consists of an adaptive strategy for the adversary in the plane: The adversary recursively maintains a space partition and places points in \emph{rounds} so that the spanner constructed so far is disjoint from most of the ellipses $B_{ab}$ that will contains the $ab$-paths for pairs of new points $a,b$. In order to control $\opt$, the adversary maintains the property that $\opt_i$ is an $x$-monotone path $\gamma_i$ after round $i$. However, this requirement means that any new point must be very close to $\gamma_i$, and $S$ will be a set of almost collinear points. The core challenge of the Steiner spanner problem seems to lie in the case of almost collinear points.

\smallskip\noindent\textbf{Higher dimensions under the $L_1$-norm.}
Finally, in Section~\ref{sec:lower-bound-L1} we provide improved lower bounds for points in $\mathbb{R}^d$ under the $L_1$ norm (without Steiner points). We show that for every $\eps>0$, under the $L_1$ norm, the competitive ratio of any online $(1+\eps)$-spanner algorithm $\Omega(\eps^{-2}/\log \eps^{-1})$ in $\R^2$ and is $\Omega(\eps^{-d})$ in $\R^d$ for $d\geq 3$.

The adversary takes advantage of the non-monotonicity of $\opt$, mentioned above. In round~1, it presents a point set $S_1\cup S_2$ for which any $(1+\eps)$-spanner (without Steiner points) must contain a complete bipartite graph between  $S_1$ and $S_2$; however the optimal Steiner $(1+\eps)$-spanner for $S_1\cup S_2$ has much smaller weight. Then in round~2, the adversary presents all Steiner points $\widehat{S}_1\cup \widehat{S}_2$ of an optimal Steiner $(1+\eps)$-spanner for $S_1\cup S_2$. The key insight is that under the $L_1$-norm (and for this particular point set), the optimal Steiner spanner for $S_1\cup S_2$ already contains Manhattan paths between any two points in $S=(S_1\cup S_2)\cup (\widehat{S}_1\cup \widehat{S}_2)$, and so it remains the optimum solution (without Steiner points) for the point set $S$.

We were unable to replicate this phenomenon under the $L_2$-norm, where the current best lower bound in $\R^d$, for all $d\geq 1$, derives from the 1-dimensional construction.
In particular, it is not sufficient to consider the \emph{Steiner ratio for $(1+\eps)$-spanners}, defined as the supremum ratio between the weight of the minimum $(1+\eps)$-spanner and the minimum Steiner $(1+\eps)$-spanner of a finite point set in $\R^d$. Under the $L_2$-norm, this ratio is $\Theta(\eps^{-1})$ in the plane and $\tilde{\Theta}(\eps^{(1-d)/2})$ in $\R^d$
for $d\geq 3$~\cite{BT-lessp-21,le2019truly,le2020unified}. However, an optimal Steiner $(1+\eps)$-spanner, need not achieve the desired $1+\eps$ stretch factor for the Steiner points.


\section{Lower and Upper Bounds for Points on a Line}\label{sec:1-D}

It is easy to analyze the one-dimensional case as the offline optimum network (\opt) for any set of points in a line is a path from the leftmost point to the rightmost point; the stretch factor of this path is always 1. (In contrast, in 2- and higher dimensions, the optimum $(1+\eps)$-spanner is highly dependent on the distribution of points, which in turn may change over time in the online model.)

\begin{figure}[htbp]
	\centering
 \includegraphics[width=\textwidth]{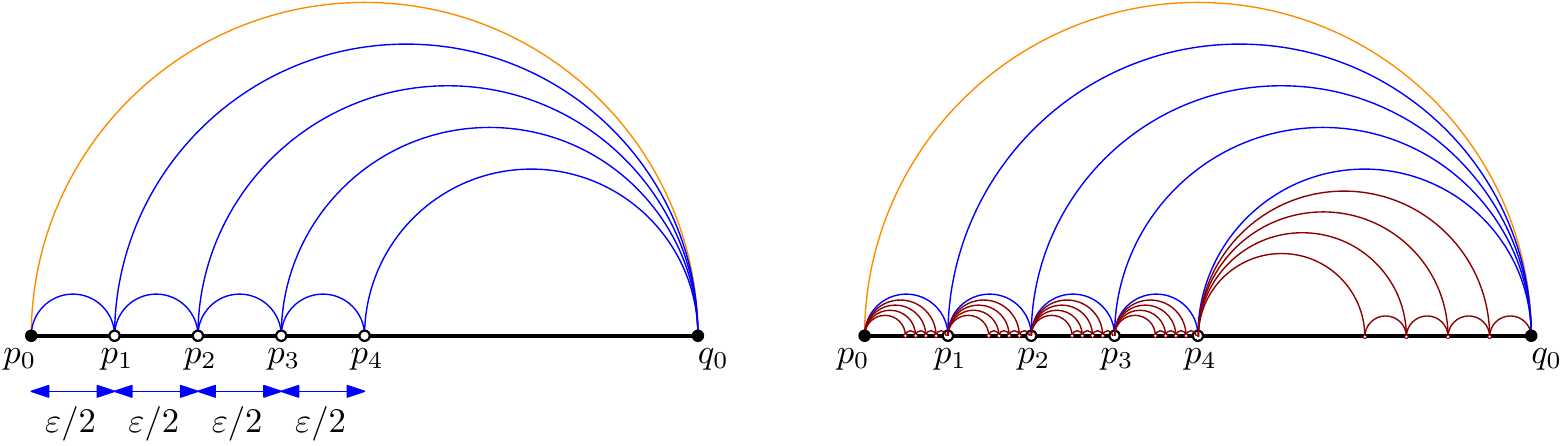}
\caption{Left: A sequence of $n$ points $(q_0,p_0,p_1,p_2,\ldots , p_{n-2})$, for $n=\lceil \eps^{-1}\rceil$, for which any online $(1+\eps)$-spanner has weight $\Omega(\eps^{-1}\opt)$. For clarity, the edges are drawn as circular arcs, but the weight of an edge $p_ip_j$ is $\|p_ip_j\|=|p_i-p_j|$.
Right: Iteration in each subinterval.}
\label{fig:Lower1D}
\end{figure}

\subparagraph{Lower bound.}
The following adversarial strategy establishes a lower bound $L(n)=\Omega(\eps^{-1})$ for the competitive ratio; refer to Fig.~\ref{fig:Lower1D}~(left). Start with two points $p_0=0$ and $q_0=1$. For the first two points, $\alg$ must add a direct edge $p_0q_0$. Then the adversary successively places points $p_i=i\cdot \frac{\eps}{2}$, for $i=1,\ldots ,n$ so that all points remain in the interval $[0,\frac12]$. Thus the number of points is $n=2+\lfloor \eps^{-1}\rfloor$.
In each round, $\alg$ must add the edge $p_iq_0$, otherwise any path between $p_i$ and $q_0$ would have to make a detour via a point in $\{p_0,\ldots ,p_{i-1}\}$, and so it would be longer than $(1+\eps)\|p_iq_0\|$. Since $\|p_iq_0\|\geq \frac12$, the weight of the network after $n-2$ iterations is at least $\alg\geq 1+\frac12(n-2)\geq 1+\frac12 \lfloor \eps^{-1}\rfloor$. Combined with $\opt=1$, this yields a lower bound of $\Omega(\eps^{-1})$ for the competitive ratio.

The adversary has placed only $O(\eps^{-1})$ points so far; this is the first stage of the strategy. In subsequent stages, the adversary repeats the same strategy in every subinterval $ab$ of previous stage, as indicated in Fig.~\ref{fig:Lower1D}~(right). After stage $j\geq 1$, we have $\alg\geq 1+\frac{j}{2} \lfloor \eps^{-1}\rfloor=\Omega(j\eps^{-1})$ and $\opt=1$.
The number of points placed in each stage increases by a factor of $\Omega(\eps^{-1})$, hence $j=\Theta(\log_{\eps^{-1}}n)=\Theta(\log n/\log \eps^{-1})$. Overall, the competitive ratio is at least $\alg/\opt\geq \Omega(j\eps^{-1})= \Omega(\eps^{-1}\log n / \log \eps^{-1})$.

\subparagraph{Upper bound.}
For proving a matching upper bound in one-dimension, we use the following online algorithm: For all $i=1,\ldots ,n$, we maintain a spanning graph $G_i$ on $S_i=\{s_1,\ldots , s_i\}$ and the $x$-monotone path $P_i$ between the leftmost and the rightmost points in $S_i=\{s_1,\ldots , s_i\}$. When point $s_i$, $i\geq 2$, arrives, we proceed as follows (see Fig.~\ref{fig:Upper1D}). If $s_i$ is left (resp., right) of all previous points, we add an edge from $s_i$ to the closest point in $S_{i-1}$ to both $P_{i-1}$ and $G_{i-1}$. Otherwise, let $ab$ be the (unique) edge of $P_{i-1}$ that contains $s_i$, and $pq$ a shortest edge of $G_{i-1}$ that contains $s_i$. Clearly, we have $P_i=P_{i-1}-ab+a s_i+s_i b$.
If $\min\{\|p s_i\|,\|s_i q\|\}> \frac{\eps}{4}\,\|pq\|$,
we add both $as_i$ and $s_ib$ to $G_i$, that is, $G_i=G_{i-1}+ a s_i+s_i b$.
Otherwise, let $G_i=G_{i-1}+ as_i$ if $\|p s_i\|\leq \|s_i q\|$,
or else $G_i=G_{i-1}+s_ib$.

\begin{figure}[htbp]
	\centering
 \includegraphics[width=\textwidth]{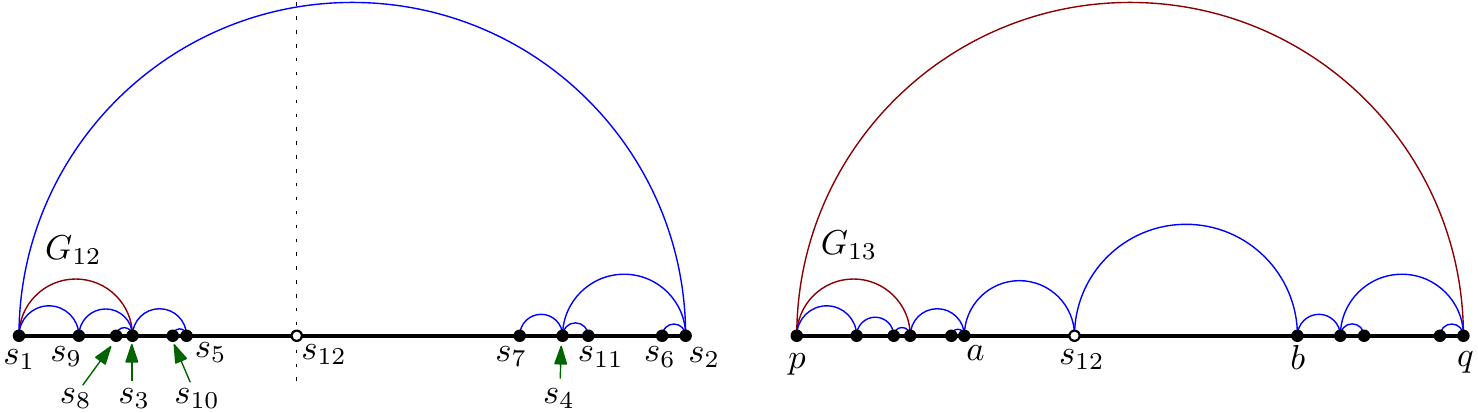}
\caption{Left: The graph $G_{12}$ for $(s_1,\ldots , s_{11})$.
Right: $pq=s_1s_2$ is the shortest edge of $G_{11}$ that contain $s_{12}$.
The algorithm adds edges $as_{12}=s_5s_{12}$ and $s_{12}b=s_{12}s_7$.}
\label{fig:Upper1D}
\end{figure}

We observe a few properties of $G_i$ that are immediate from the construction: (P1) At the time when edge $e$ is added to $G_i$, then the interior of $e$ does not contain any vertices. (P2) The edges in $G_i$ form a laminar set of intervals (i.e., any two edges are interior-disjoint, or one contains the other).
(P3) If $e_1,e_2$ are edges in $G_i$ and $e_2\subset e_1$, then $\|e_2\|\leq (1-\frac{\eps}{4})\|e_1\|$.
We note that properties (P1)--(P3) are inherently 1-dimensional, as the edges are intervals in $\R$, and they do not seem to generalize to higher dimensions.

\begin{lemma}\label{lem:1D}
For $i=1,\ldots ,n$, the graph $G_i$ is a $(1+\eps)$-spanner for $S_i$.
\end{lemma}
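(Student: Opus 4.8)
The plan is to prove by induction on $i$ that $G_i$ is a $(1+\eps)$-spanner for $S_i$, i.e., for every pair $x,y\in S_i$ the graph $G_i$ contains an $xy$-path of weight at most $(1+\eps)\|xy\|$. The base case $i\le 2$ is trivial since $G_1$ is a single point and $G_2$ consists of the direct edge $s_1s_2$. For the inductive step, assume $G_{i-1}$ is a $(1+\eps)$-spanner for $S_{i-1}$; we must verify the stretch bound for all pairs involving the new point $s_i$ (pairs inside $S_{i-1}$ are handled by the inductive hypothesis together with the fact that $G_{i-1}\subseteq G_i$).

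First I would dispose of the easy cases. If $s_i$ lies to the left or right of all previous points, the algorithm connects $s_i$ by an edge to its nearest neighbor $s_j\in S_{i-1}$, and since $s_i$ is extremal on the line, $\|s_is_j\|=\dist(s_i,S_{i-1})$; then for any $x\in S_{i-1}$ an $s_ix$-path is obtained by concatenating $s_is_j$ with the $(1+\eps)$-spanner path from $s_j$ to $x$ in $G_{i-1}$, and because $s_j$ lies between $s_i$ and $x$ on the line, $\|s_is_j\|+(1+\eps)\|s_jx\|\le (1+\eps)(\|s_is_j\|+\|s_jx\|)=(1+\eps)\|s_ix\|$. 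The substantive case is when $s_i$ falls inside the edge $ab$ of the monotone path $P_{i-1}$, and $pq$ is a shortest edge of $G_{i-1}$ containing $s_i$ (note $pq\supseteq ab \ni s_i$, and by (P1) $a,b$ lie outside the interior of $pq$ only if $pq=ab$; in general $p\le a\le s_i\le b\le q$ on the line).

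The heart of the argument is: whichever edges the algorithm adds, $G_i$ contains both an $as_i$-path and an $s_ib$-path of weight at most $(1+\eps)$ times the respective distance, and moreover a path from $s_i$ to every other vertex with the right stretch. If $\min\{\|ps_i\|,\|s_iq\|\}>\tfrac{\eps}{4}\|pq\|$, the algorithm adds both $as_i$ and $s_ib$, so the $as_i$- and $s_ib$-paths are single edges of weight exactly the distance, and for any $x\in S_{i-1}$ we route from $s_i$ to $a$ or $b$ (whichever lies between $s_i$ and $x$) and then use the $G_{i-1}$-path; monotonicity on the line again makes the stretch telescope. If instead, say, $\|ps_i\|\le\tfrac{\eps}{4}\|pq\|$ and $\|ps_i\|\le\|s_iq\|$, the algorithm adds only $as_i$. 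Then the $s_ib$-path must be built as $s_i\to a\to b$ using the edge $as_i$ plus the $G_{i-1}$-path from $a$ to $b$; here I bound $\|as_i\|\le \|ps_i\|\le\tfrac{\eps}{4}\|pq\|$ and $\|as_i\|+(1+\eps)\|ab\| \le (1+\eps)\|s_ib\| + 2\|as_i\|$, and since $s_i$ is close to $p$ which is an endpoint of the long edge $pq\supseteq s_ib$, I would show $\|s_ib\|\ge \|ab\|$ is comparable to $\|pq\|$ unless $s_i$ is also close to $a$ — a short case check using $\|ps_i\|\le\tfrac{\eps}{4}\|pq\|$ and the laminarity/decay property (P3) bounds the extra detour $2\|as_i\|$ by $\eps\|s_ib\|$. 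Symmetric reasoning handles the case $s_ib$ is added. I expect the main obstacle to be exactly this last bookkeeping: ruling out the degenerate situation where $s_i$ is simultaneously near $p$ (so the algorithm omits $s_ib$) yet far from $a$ (so $\|as_i\|$ is not negligible relative to $\|s_ib\|$); resolving it cleanly requires invoking that $pq$ is the \emph{shortest} $G_{i-1}$-edge through $s_i$, so $ab$ (being a subedge containing $s_i$, if $ab\ne pq$) cannot exist — hence $ab=pq$, $a=p$, $b=q$, and then $\|as_i\|=\|ps_i\|\le\tfrac{\eps}{4}\|pq\|=\tfrac{\eps}{4}\|s_ib\|\cdot\tfrac{\|pq\|}{\|s_ib\|}$, and one finishes with $\|s_ib\|=\|pq\|-\|ps_i\|\ge(1-\tfrac{\eps}{4})\|pq\|$, giving $2\|as_i\|\le \tfrac{\eps}{2}\|pq\|\le \eps\|s_ib\|$ as needed. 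Finally, for a pair $x,y\in S_i$ neither of which is $s_i$, the inductive hypothesis applies verbatim since $G_{i-1}\subseteq G_i$, completing the induction.
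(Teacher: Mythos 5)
Your outline follows the paper's structure (induction, the extremal case, the ``both edges added'' case, then the hard one-edge case), but the step you yourself flag as the crux contains a genuine gap. You resolve the hard case by arguing that minimality of $pq$ forces $ab=pq$, hence $a=p$ and $b=q$. That inference is invalid: $ab$ is an edge of the auxiliary monotone path $P_{i-1}$, and $P_{i-1}$ is \emph{not} a subgraph of $G_{i-1}$ --- whenever the algorithm adds only one of $as_i$, $s_ib$ to $G_i$, the other edge is placed in $P_i$ but not in $G_i$. So the existence of a $P_{i-1}$-edge shorter than $pq$ through $s_i$ contradicts nothing about $pq$ being the shortest \emph{$G_{i-1}$-edge} through $s_i$. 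Concretely, take $s_1=0$, $s_2=1$, $s_3=\eps/8$, $s_4=\eps/5$: when $s_4$ arrives, the shortest $G_3$-edge containing it is $pq=s_1s_2$ and $\|ps_4\|=\eps/5\le\frac{\eps}{4}\|pq\|$, yet $a=s_3\neq p$ and $ab=s_3s_2\notin G_3$. Without $a=p$, your fallback route ($s_i\to a$ followed by the inductive $(1+\eps)$-path from $a$ to $b$) does not meet the bound: its weight can be as large as $(2+\eps)\|as_i\|+(1+\eps)\|s_ib\|$, which exceeds $(1+\eps)\|s_ib\|$ whenever $\|as_i\|>0$, and this additive overshoot cannot be absorbed later, since the concatenated inductive path from $b$ to $s_j$ already consumes the entire $(1+\eps)$ budget.

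The paper closes exactly this hole without identifying $a$ with $p$: by laminarity (P2), $G_{i-1}$ contains $x$-monotone, hence \emph{exact-weight}, paths $\pi(a,p)$ and $\pi(q,b)$, and (P1) is used to show $\|ap\|\le\frac{\eps}{4}\|pq\|$ and $\|bq\|\le\frac{\eps}{4}\|pq\|$ even when $a\neq p$ or $b\neq q$. The route is then $s_i\to a\to p$, across the direct edge $pq$, and back from $q$ to $b$, of weight at most $\|s_ip\|+\|pq\|+\|qb\|\le(1+\frac{\eps}{2})\|pq\|$, compared against $\|s_ib\|\ge(1-\frac{\eps}{2})\|pq\|$; no stretched $a$-to-$b$ path is ever invoked. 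To repair your argument you would need to prove these two proximity bounds for $a$ and $b$ from the construction history (properties (P1)--(P2)), not from the shortest-edge property of $pq$ alone; the case $a\neq p$ is precisely the situation your current write-up cannot handle.
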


\begin{proof}
We proceed by induction on $i$. The base case $i=1$ is trivial. Assume that $i>1$ and $G_{i-1}$ is a $(1+\eps)$-spanner for $S_{i-1}$. It is enough to show that for every $j=1,\ldots , i-1$, $G_i$ contains an $s_is_j$-path of weight at most $(1+\eps)\|s_is_j\|$. We consider all cases step $i$ of the algorithm.

\noindent \textbf{Case~1:} Assume that $s_i$ is left of (resp., right of) $S_{i-1}$, and the closest previous point is $a\in S_{i-1}$. By induction, $G_{i-1}$ contains a path $\pi(a,s_j)$ of weight at most $(1+\eps)\|a s_j\|$. Hence the weight of the path $s_ia+\pi(a,s_j)$ is at most $\|s_ia\|+(1+\eps)\|a s_j\|< (1+\eps)(\|s_ia\|+\|a s_j\|=(1+\eps)\|s_i s_j\|$.

\noindent \textbf{Case~2:} Assume that $pq$ is the shortest edge of $G_{i-1}$ that contains $s_i$, and $a,b\in S_{i-1}$ are the closest previous points to $s_i$ on the left and right, respectively.
If $\min\{\|ps_{i}\|,\|s_{i}q\|\}>\frac{\eps}{4}\,\|pq\|$, then $G_i=G_{i-1} + a s_i+s_i b$, and we can argue similarly to Case~1.

Otherwise, we have $\min\{\|ps_{i}\|,\|s_{i}q\|\}\leq \frac{\eps}{4}\,\|pq\|$. Assume w.l.o.g. that $\|ps_{i}\|\leq \|s_{i}q\|$, consequently $G_i=G_{i-1}+a s_i$.
If $s_j$ is to the left of $s_i$, we can argue similarly to Case~1.
Hence we may assume that $s_j$ is to the right of $s_i$.

Property (P2) implies that $G_{i-1}$ contains an $x$-monotone $ap$-path $\pi(a,p)$, which has optimal weight $\|ap\|$; and an $x$-monotone $qb$-path $\pi(q,b)$ of weight $\|bq\|$.
Property (P1) implies that $a=p$ or $a$ was inserted after $p$ and $q$; and since $a$ is not the left endpoint of any edge in $G_i$, then
$\|ap\|\leq \frac{\eps}{4}\|pq\|$.
Analogously, we have $\|bq\|\leq \frac{\eps}{4}\|pq\|$. Consequently,
\[
\|s_ib\|
=\|pq\|-\|ps_i\|-\|bq\|
\geq \left(1-2\cdot \frac{\eps}{4}\right)\|pq\|
= \left(1-\frac{\eps}{2}\right)\|pq\|.
\]
Now we can construct an $s_ib$-path $\pi(s_i,b)=s_ia+\pi(a,p)+pq+\pi(q,b)$ of weight at most
\[
\|\pi(s_i,b)\|
\leq \|s_i p\|+\|pq\| +\|qb\|
\leq \left(1+\frac{\eps}{2}\right)\|pq\|
\leq \frac{1+\eps/2}{1-\eps/2}\,\|s_i b\|
<(1+\eps)\|s_i b\|.
\]
We can complete the proof now. By induction, $G_{i-1}$ contains a $bs_j$-path $\pi(b,s_j)$ of weight at most $(1+\eps)\|b s_j\|$. Hence the weight of the path $\pi(s_i,b)+\pi(b,s_j)$ is at most $(1+\eps)\|s_ib\|+(1+\eps)\|b s_j\|= (1+\eps)(\|s_ib\|+\|b s_j\|=(1+\eps)\|s_i s_j\|$, as required.
\end{proof}

\begin{lemma}\label{lem:1D-weight}
For $i=1,\ldots ,n$, we have $\|G_i\|\leq O(\eps^{-1}\opt_i\log i / \log \eps^{-1})$.
\end{lemma}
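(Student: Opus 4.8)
The plan is to follow the intuition already sketched in the introduction: group the edges of $G_i$ into buckets by weight, and within each bucket charge the total edge weight to $\opt_i$ using the laminar structure (P2) and the geometric decay (P3). First I would fix $i$ and partition the edges of $G_i$ into \emph{buckets} $E_\ell$, where $E_\ell$ collects the edges $e$ with $\eps^{-\ell} < \|e\| \le \eps^{-(\ell-1)}$ (after rescaling so that $\opt_i$, i.e. the length of the spanning interval, is normalized; concretely $\opt_i \le \|e\| \cdot (\text{something})$ gives that only $O(\log_{\eps^{-1}} i)$ buckets are nonempty, since the shortest edge has weight at least $\opt_i / i$ — every edge, when inserted, had empty interior by (P1), so $G_i$ has at most $i-1$ edges and the ratio of longest to shortest edge along any descending chain, hence overall, is at most $(1-\eps/4)^{-(i-1)}$, bounding the number of buckets by $O(\log i / \log \eps^{-1})$).

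Next, inside a single bucket this is not yet enough, so I would instead group \emph{levels} of the containment poset. Define the level of an edge $e$ in $G_i$ as the number of edges of $G_i$ strictly containing it; by (P2) the edges form a laminar family, so ``level'' is well defined and the edges at level $0$ are pairwise interior-disjoint intervals whose union has length at most $\opt_i$, hence $\sum_{e \text{ at level }0}\|e\| \le \opt_i$. By (P3), if $e'$ is a child of $e$ in the containment poset then $\|e'\| \le (1-\tfrac{\eps}{4})\|e\|$; summing over all children of a fixed $e$ (which are interior-disjoint and contained in $e$), $\sum_{e' \text{ child of } e}\|e'\| \le \|e\|$, but more importantly after iterating $\lceil 4/\eps\rceil$ levels each weight has shrunk by a factor $(1-\tfrac\eps4)^{4/\eps} < \tfrac12$. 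I would make this precise as: letting $W_k = \sum_{e \text{ at level }k}\|e\|$, one has both $W_{k+1}\le W_k$ (children interior-disjoint inside parents) and $W_{k+\lceil 5/\eps\rceil} \le \tfrac12 W_k$ (each edge's descendants $\lceil 5/\eps\rceil$ levels down are interior-disjoint, contained in it, and each individually at most $\tfrac12\|e\|$, but one needs the slightly stronger statement that their \emph{sum} is $\le \tfrac12\|e\|$ — which follows because they are contained in the ``gap'' structure; here the extra slack of $5/\eps$ over $4/\eps$ absorbs the constant). Consequently a \emph{block} of $\lceil 5/\eps\rceil$ consecutive levels contributes $\sum_{k \in \text{block}} W_k \le \lceil 5/\eps\rceil \cdot W_{k_0} = O(\eps^{-1})\,W_{k_0}$, and since the $W_{k_0}$ over the starting levels of successive blocks decay geometrically from $W_0 \le \opt_i$, summing the geometric series gives $\|G_i\| = \sum_k W_k = O(\eps^{-1}\opt_i)$ — wait, this already omits the $\log$; the $\log$ re-enters only because the above double counts: the decay $(1-\eps/4)$ is along containment, but \emph{buckets} cut across containment, so I would actually run the level argument \emph{within each bucket separately}, where level-$0$ edges of a bucket need not be interior-disjoint across the whole line but their union still has length $O(\opt_i)$ per bucket, and there are $O(\log i/\log\eps^{-1})$ buckets. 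Multiplying $O(\eps^{-1}\opt_i)$ per bucket by the number of buckets yields the claimed $O(\eps^{-1}\opt_i \log i / \log \eps^{-1})$.

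I expect the main obstacle to be the bookkeeping at the interface between buckets and the containment poset: an edge $e$ and its poset-child $e'$ may lie in the same bucket or in different buckets, and the clean statement ``$W_{k+\lceil 5/\eps\rceil}\le \tfrac12 W_k$'' must be set up so that the geometric decay is charged to the \emph{right} $\opt_i$ within each bucket without losing a factor. The cleanest route is probably: (i) prove $\sum_{e'}\|e'\| \le \|e\|$ over the children of $e$ \emph{that lie in the same bucket}, losing nothing, and handle cross-bucket parent/child relations by noting each edge has at most one parent so the ``top'' edges of a bucket (those whose parent is in a higher bucket, or which are maximal) have total weight $O(\opt_i)$ because they are interior-disjoint-enough; (ii) within a bucket, re-index levels relative to these top edges and apply the $(1-\eps/4)^{5/\eps}<\tfrac12$ block decay. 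A minor secondary point is confirming $\opt_i$ equals the span length and that every edge of $G_i$, at insertion time, had empty interior (immediate from (P1)), which bounds $|E(G_i)| \le i-1$ and hence the number of buckets; this needs only a one-line argument but should be stated.
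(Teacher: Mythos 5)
Your overall strategy (weight buckets, the laminar containment poset from (P2), and geometric decay from (P3) over blocks of $\Theta(\eps^{-1})$ levels) is the same as the paper's, but two steps in your write-up are genuinely gapped. First, your bound on the number of relevant buckets does not hold: the claim that the shortest edge of $G_i$ has weight at least $\opt_i/i$ is false (e.g.\ repeatedly inserting the midpoint of the leftmost current edge creates edges of length $2^{-\Omega(i)}\opt_i$), and your fallback argument via chains of $(1-\eps/4)$-decay only bounds the longest-to-shortest ratio by $(1-\eps/4)^{-O(i)}$, which gives $O(i\,\eps/\log\eps^{-1})$ nonempty buckets, not $O(\log i/\log\eps^{-1})$. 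The paper avoids counting nonempty buckets altogether: since $G_i$ has $O(i)$ edges, all edges of weight at most $\opt_i/i^2$ contribute $O(\opt_i)$ in total, and only the $O(\log i/\log\eps^{-1})$ buckets above that threshold are fed into the per-bucket argument. Without some such truncation your final multiplication over buckets does not yield the claimed bound.

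Second, the heart of the lemma is exactly the statement you concede you need but do not prove: that the total level weight within a bucket halves every $\lceil 5/\eps\rceil$ levels. Your justification (each descendant is individually at most $\tfrac12\|e\|$, plus ``the gap structure'' and ``slack absorbs the constant'') is not an argument, and the statement is false for a general laminar family satisfying (P3): a balanced binary splitting keeps the per-level total essentially equal to $\|pq\|$ at every depth, even though each individual edge decays. What rescues the claim is the bucket constraint, which you never use here: below a maximal bucket edge $pq$, every edge of the bucket has weight greater than $\eps\|pq\|$, so level $k=\lceil 5/\eps\rceil$ has at most $\eps^{-1}$ nodes, hence fewer than $\eps^{-1}$ branching nodes, and every root-to-level-$k$ path contains at least $4\eps^{-1}$ single-child steps. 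Even then, converting this per-path decay into a bound on the \emph{sum} of the level-$k$ weights requires an extra step (the paper restructures the tree by pushing single-child nodes below all branching nodes, obtaining a full binary tree with $(1-\eps/4)^{4/\eps}<\tfrac12$-decaying chains attached to its leaves); the naive combination ``at most $\eps^{-1}$ nodes, each at most $(1-\eps/4)^{k}\|pq\|$'' is too weak. As written, your proposal asserts the key inequality rather than proving it, so the argument is incomplete at its central point, although it is repairable along the paper's lines.
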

\begin{proof}
We may assume w.l.o.g.\ that $i=n$, and let $\opt=\opt_n$ for brevity.
Let $E$ be the edge set of $G_n$. The order in which $\alg$ adds edges to $E$ defines a (precedence) poset on $E$. We partition $E$ by weight as follows:
Let $\beta=\eps^{-1}$; and for all $\ell\in \Z$, let $E_\ell$ be the set of edges $e\in E$ with $\beta^{\ell}<\|e\|\leq \beta^{\ell+1}$. Since $\|e\|\leq \opt$ for all $e\in E$, every edge is in $E_\ell$ for some $\ell\leq \log_\beta \opt$. Furthermore, for all $\ell\leq  \log_\beta (\opt/n^2)$, the edges $e\in E_\ell$ have weight $\|e\|\leq \opt/n^2$, and so the total weight of these edges is less than $\opt$. It remains to consider $E_\ell$ for $\log_\beta (\opt/n^2)\leq \ell\leq \log_\beta \opt$, that is, for $O(\log n/\log \eps^{-1})$ values of $\ell$.

Let $pq$ be an edge in $E_\ell$ that is not contained in any previous edge in $E_\ell$. By property (P2), the edges in $E_\ell$ form a laminar family, and so $pq$ does not overlap with any previous edge in $E_\ell$; and $pq$ contains any subsequent edge that overlaps with it. Let $E_\ell(pq)$ be the set of all edges in $E_\ell$ that are contained in $pq$ (including $pq$). We claim that
\begin{equation}\label{eq:1}
    \|E_\ell(pq)\| \leq O(\eps^{-1}\|pq\|).
\end{equation}
Summation over all edges $pq\in E_\ell$ that are not contained in previous edges in $E_\ell$ implies $\|E_\ell\|\leq O(\eps^{-1}\opt)$. Summation over all $\ell\in \Z$ then yields
\[\|E\|
=\sum_{\ell\in \Z} \|E_\ell\|
=\sum_{\ell=\lfloor\log_\beta(\opt/n^2)\rfloor}^{\lceil \log_\beta \opt\rceil} \|E_\ell\| +O(\opt)
= O(\eps^{-1}\,\opt\log_\beta n).\]

To prove \eqref{eq:1}, consider the containment poset of $E_{\ell}(pq)$. In fact, we represent the poset as a rooted binary tree $T$: The root corresponds to $pq$, and edges $e_1,e_2\in E_\ell(pq)$ are in parent-child relation iff $e_2\subset e_1$, and there is no edge $e'\in E_{\ell}(pq)$ with $e_2\subset e'\subset e_1$.
Each level of $T$ corresponds to interior-disjoint edges contained in $\|pq\|$, so the sum of weight on each level is at most $\|pq\|$.
The total weight of the first $k=\lceil 5\eps^{-1}\rceil$ levels is $O(\eps^{-1}\|pq\|)$.

We claim that the total weight on level $k=\lceil 5\eps^{-1}\rceil$ is at most $\frac12 \|pq\|$. We distinguish between three types of nodes in the subtree of $T$ between levels 0 and $k$: A \emph{branching node} has two children, a \emph{single-child node} has one child, and a \emph{leaf} has no children (in particular all nodes in level $k$ are considered leaves in this subtree). The nodes (leaves) at level $k$ correspond to interior-disjoint edges $e\subset pq$ with $\|e\|\geq \eps\,\|pq\|$ by the definition of $E_\ell(pq)$. Thus there are at most $\lfloor \eps^{-1}\rfloor$ nodes at level $k$, hence there are less than $\lfloor \eps^{-1}\rfloor$ branching nodes. This implies that for any node $e$ on level $k$,
the descending path from the root $pq$ to $e$ contains at least $k-\lfloor\eps^{-1}\rfloor\geq \lceil 4\eps^{-1}\rceil$ single-child nodes.

For the purpose of bounding the total weight at level $k$, we can modify $T$, by incrementally moving all single-child nodes below all branching nodes as follows. While there is an edge $uv$ in $T$, such that $u$ is a branching node, and its parent $v$ is a single-child node, we suppress $u$ and subdivide the two edges of $T$ below $u$ with new nodes $v_1$ and $v_2$. The weight along the edge $uv$ goes down by a factor of at most $(1-\frac{\eps}{4})$ by property (P3); we set the weights in the modified tree such that the same decrease occurs along the edges $uv_1$ and $uv_2$. Then each operation maintains property (P3), and the total weight at level $k$ does not change.
When the while loop terminates, we obtain a full binary tree with a chain attached to each leaf. As we argued above, each chain has length $\lfloor 4/\eps\rfloor$ or more. The full binary tree does not necessarily decrease the weight. Along each chain of $\lfloor 4/\eps\rfloor$ or more single-child nodes, the weight is cumulatively multiplied by a factor of at most $(1-\frac{\eps}{4})^{\lceil 4/\eps\rceil}<\frac12$. Overall, the total weight at level $k=\lceil 5\eps^{-1}\rceil$ is at most $\frac12 \|pq\|$, as claimed.

By induction, for every integer $j\geq 0$, the total weight at level $jk=j\lceil 5\eps^{-1}\rceil$ is at most $\|pq\|/2^j$. Consequently, the total weight of
a \emph{block} of $k$ consecutive levels $\{jk+1,\ldots ,(j+1)k \}$ is at most $k \|pq\|/2^j$.  Overall, $\|E_\ell(pq)\| = \sum_{j\geq 0} k \|pq\|/2^j =O(k\|pq\|) = O(\eps^{-1}\,\|pq\|)$, which completes the proof of \eqref{eq:1}.
\end{proof}

We can summarize the discussion above in the following theorem.

\begin{theorem}\label{thm:1D-bounds}
For every $\eps>0$, the competitive ratio of any online algorithm for $(1+\eps)$-spanners for a sequence of points on a line is $\Omega(\eps^{-1}\log n / \log \eps^{-1})$. Moreover, there is an online algorithm that maintains a $(1+\eps)$-spanner with competitive ratio
$O(\eps^{-1}\log n/ \log \eps^{-1})$.
\end{theorem}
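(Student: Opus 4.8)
The statement is a two-sided bound, so the proof splits into a lower-bound half and an upper-bound half, and in fact each half has already been set up in the preceding discussion; the work here is to assemble the pieces and check that the parameters line up. For the lower bound I would invoke the adversarial strategy described above: start with $p_0=0$, $q_0=1$ (forcing the edge $p_0q_0$), then place $\lfloor\eps^{-1}\rfloor$ points $p_i=i\cdot\frac{\eps}{2}$ inside $[0,\frac12]$, each of which forces a fresh edge of weight $\geq\frac12$ to $q_0$, giving $\alg\geq 1+\frac12\lfloor\eps^{-1}\rfloor$ while $\opt=1$. For the upper bound I would take the online clustering algorithm defined via the shortest containing edge $pq$ and the threshold $\frac{\eps}{4}\|pq\|$, and simply combine Lemma~\ref{lem:1D} (correctness of the stretch factor) with Lemma~\ref{lem:1D-weight} (the weight bound), reading off the competitive ratio with $\alg=\|G_n\|$ and $\opt=\opt_n$.

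\textbf{Lower bound, in detail.} The single-stage construction gives only $\Omega(\eps^{-1})$, so I would recurse: after the first stage every edge $ab$ of the current path is an interval, and we rerun the same gadget inside each such subinterval (rescaled), as in Fig.~\ref{fig:Lower1D}(right). Each stage adds $\frac12\lfloor\eps^{-1}\rfloor$ to $\alg$ per surviving subinterval and leaves $\opt=1$ (the optimum is always the monotone path from the leftmost to the rightmost point, of weight $1$). The number of points grows by a factor $\Theta(\eps^{-1})$ per stage, so after $j=\Theta(\log n/\log\eps^{-1})$ stages we have used $n$ points and $\alg\geq 1+\frac{j}{2}\lfloor\eps^{-1}\rfloor=\Omega(\eps^{-1}\log n/\log\eps^{-1})$. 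Dividing by $\opt=1$ finishes this half. The one point requiring a little care is verifying that an online algorithm is genuinely forced to add the long edge in each round — this is exactly the detour argument: any $p_iq_0$-path routed through an earlier $p_k$ has length $\|p_ip_k\|+\|p_kq_0\|$, and since the $p_k$'s are spread over a length-$\frac12$ interval while $\|p_iq_0\|\geq\frac12$, this exceeds $(1+\eps)\|p_iq_0\|$.

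\textbf{Upper bound, in detail.} By Lemma~\ref{lem:1D}, $G_i$ is a $(1+\eps)$-spanner for every $i$, so the algorithm is admissible. By Lemma~\ref{lem:1D-weight}, $\|G_n\|\leq O(\eps^{-1}\,\opt_n\,\log n/\log\eps^{-1})$. Since the algorithm is online (it only ever adds edges $as_i$, $s_ib$, or a single endpoint edge, never deleting), $\alg(\sigma)=\|G_n\|$ for any input sequence $\sigma$ of length $n$, and $\opt(\sigma)=\opt_n$. Hence $\alg(\sigma)/\opt(\sigma)=O(\eps^{-1}\log n/\log\eps^{-1})$ uniformly over $\sigma$, which is the claimed competitive ratio. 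The substantive content here is entirely in Lemma~\ref{lem:1D-weight} — the bucketing by weight class $E_\ell$, the laminar (tree) structure from properties (P1)--(P3), and the tree-restructuring argument showing the per-level weight halves every $\lceil 5/\eps\rceil$ levels — so the theorem proof itself is just a one-line assembly.

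\textbf{Main obstacle.} For the theorem as stated there is essentially no obstacle beyond bookkeeping, since the two lemmas and the adversary are in hand; the genuine difficulty lives upstream, in the weight analysis of Lemma~\ref{lem:1D-weight}. The delicate point there is that single-child chains (where weight decays geometrically by property (P3)) and branching nodes (where weight can fail to decay) are interleaved, so one cannot directly say the weight halves; the fix is to argue there are at most $\lfloor\eps^{-1}\rfloor$ branching nodes (because the $\geq\eps\|pq\|$ edges at level $\lceil 5/\eps\rceil$ are interior-disjoint inside $pq$), push all branchings above all chains without changing the level-$k$ weight, and then use $(1-\frac{\eps}{4})^{\lceil 4/\eps\rceil}<\frac12$ along each resulting chain. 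Matching this against the recursive lower bound block-by-block and bucket-by-bucket is what makes the bound tight.
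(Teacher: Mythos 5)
Your proposal matches the paper's own treatment: the theorem is indeed proved there by assembling the recursive adversarial construction (single stage forcing edges of weight $\geq\frac12$ to $q_0$, then rescaled repetition in subintervals over $\Theta(\log n/\log\eps^{-1})$ stages against $\opt=1$) for the lower bound, and Lemma~\ref{lem:1D} plus Lemma~\ref{lem:1D-weight} for the upper bound. Your summary of where the real work lies (the bucketing, laminarity, and branching-versus-chain restructuring inside Lemma~\ref{lem:1D-weight}) is also faithful to the paper, so no gaps to report.
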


\section{Upper Bounds for Spanners in $\R^d$ under the $L_2$ Norm}
\label{sec:L2-upper}

We turn to online $(1+\eps)$-spanners in Euclidean $d$-space for $d\geq 2$. The dynamic algorithm  {\sc DefSpanner} by Gao et al.~\cite{gao2006deformable}, based on hierarchical clustering, achieves $O(\eps^{-(d+1)}\log n)$ competitive ratio in the online model.
In Section~\ref{sec:L2-upper-woSteiner}, we recover the same bound with a new analysis, where we refine the hierarchical clustering with a partition of the edges into buckets of similar directions, locations, and weights.
In Section~\ref{sec:L2-upper-withSteiner}, we extend the new analysis to show that the competitive ratio improves to $O(\eps^{(1-d)/2}\log n)$ if we are allowed to use Steiner points. Our spanner algorithm replaces each bucket of ``similar'' edges with a Steiner network using grids and shallow-light trees, for up to $O(\eps^{(1-d)/2})$ directions.

\smallskip\noindent\textbf{Preliminaries.}
\emph{Well-separated pair-decomposition} (for short, WSPD) of a finite point set $S$ in a metric space is a classical tool for constructing $(1+\eps)$-spanners~\cite{CallahanK95,GudmundssonK18,narasimhan2007geometric,Smid18}. It is a collection of pairs $\{(A_i,B_i): i\in I\}$ such that for all $i\in I$, we have $A_i,B_i\subset S$ and $\max\{\text{diam}(A_i),\text{diam}(B_i)\}\leq \eps\,\text{dist}(A_i,B_i)$; and for every point pair $\{s,t\}\subset \binom{S}{2}$, there is a pair $(A_i,B_i)$ such that $A_i$ and $B_i$ each contains precisely one of $s$ and $t$. It was shown by Callahan and Kosaraju~\cite{CallahanK93} that if a graph $G=(S,E)$ contains an edge between arbitrary points in $A_i$ and $B_i$, for all $i\in I$, then $G$ is an $(1+O(\eps))$-spanner for $S$; see also \cite[Ch.~9]{narasimhan2007geometric}.

Dynamic spanners (including the fully dynamic algorithm by Roditty~\cite{Roditty12} and {\sc DefSpanner} by Gao et al.~\cite{gao2006deformable}) rely on WSPDs and hierarchical clustering. In $\R^d$, hierarchical clustering can be obtained by classical recursive space partitions such as \emph{quadtrees}~\cite[Ch.~14]{BergCKO08}. Dynamic quadtrees and their variants have been studied extensively, due to their broad range of applications; see~\cite[Ch.~2]{Sariel-Book}. 
In general, dynamic quadtrees can handle both point insertion and deletion operations. However, in the context of an online algorithm, where the points are only inserted, note that no cell of the quadtree is ever deleted.
We analyse the competitive ratio of the dynamic incremental algorithm by Fischer and Har-Peled~\cite{FischerH05} that maintains an $(1+\eps)$-spanner for $n$ points in Euclidean $d$-space in expected $O([\log n+\log \eps^{-1}]\,\eps^{-d}\log n)$ update time. However, they have not analyzed the ratio between the \emph{weight} of the resulting $(1+\eps)$-spanner and the minimum weight of an $(1+\eps)$-spanner.

\subsection{Online Algorithm without Steiner Points}
\label{sec:L2-upper-woSteiner}

\subparagraph{Online Algorithm.}
We briefly review the algorithm in~\cite{FischerH05} and then analyze the weight. The input is a sequence of points $(s_1,s_2,\ldots )$ in $\R^d$; the set of the first $n$ points is denoted by $S_n=\{s_i: 1\leq i\leq n\}$. For every $n$, we dynamically maintain a quadtree $\mathcal{T}_n$ for $S_n$. Every node of $\mathcal{T}_n$ corresponds to a cube.
The root of $\mathcal{T}_n$, at level $0$, corresponds to a cube $Q_0$ of side length $a_0=\Theta(\diam(S_n))$. At every level $\ell\geq 0$, there are at most $2^{d\ell}$ interior-disjoint cubes, each of side length $a_0/2^\ell$. A cube $Q\in \mathcal{T}_n$ is \emph{nonempty} if $Q\cap S_n\neq \emptyset$. For every nonempty cube $Q$, we select an arbitrary representative $s(Q)\in Q\cap S_n$. At each level $\ell$, let $E_\ell$ be the set of all edges $s(Q_1)s(Q_2)$ for pairs of cubes $\{Q_1,Q_2\}$ on level $\ell$ such that $\frac{c_1 a_0}{\eps\,2^\ell}\leq \|s(Q_1) s(Q_2)\|\leq \frac{c_2 a_0}{\eps\,2^\ell}$ for some constants $0<c_1<c_2$ that depend on $d$; see Fig.~\ref{fig:quadtree}(left). The algorithm maintains the spanner $G=(S_n,E)$ where $E=\bigcup_{\ell\geq 0} E_\ell$.
A classical argument by Callahan and Kosaraju~\cite{CallahanK93} (see also~\cite{GudmundssonK18,narasimhan2007geometric,Smid18}) shows that $G$ is a $(1+\eps)$-spanner for $S_n$.

\begin{figure}[htbp]
	\centering
 \includegraphics[width=0.8\textwidth]{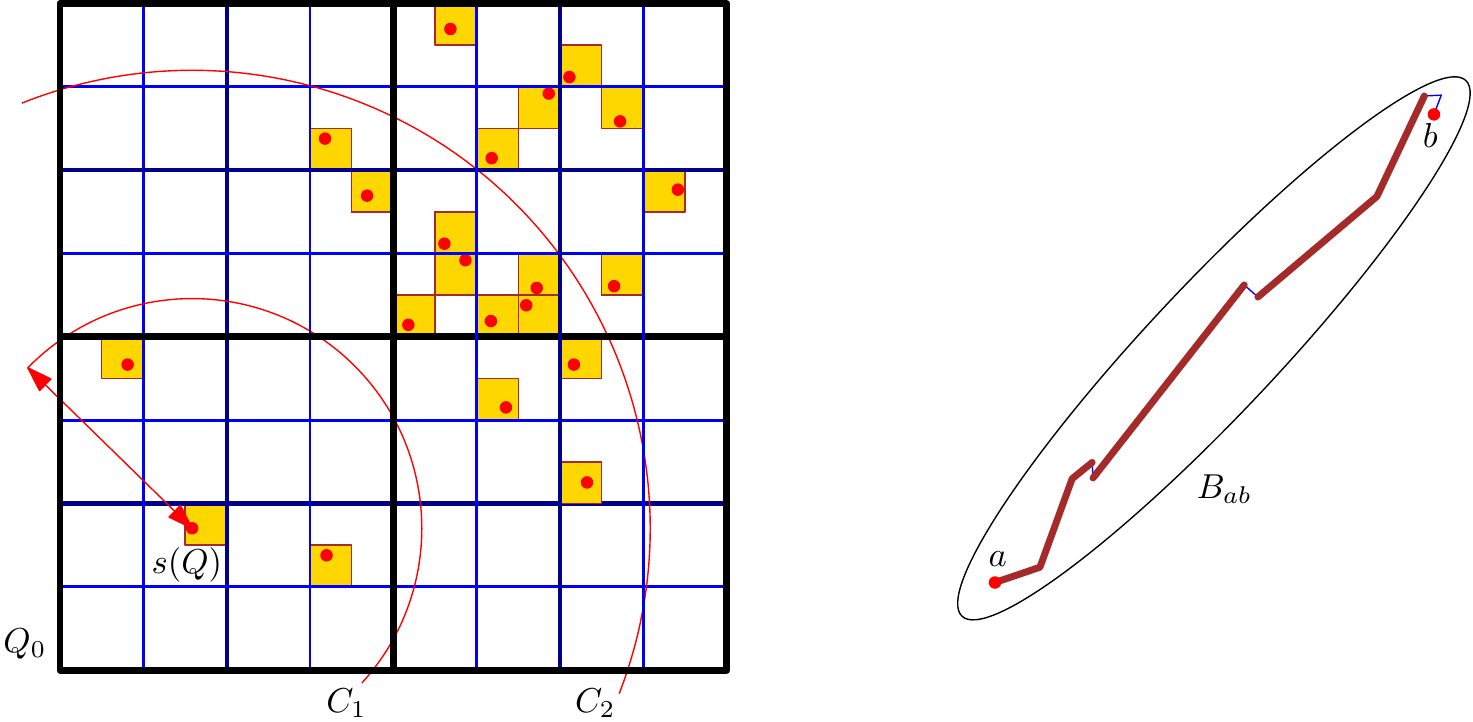}
\caption{Left: Nonempty squares at level $\ell=4$ of a quadtree, each with a representative (red dots). Point $s(Q)$ is connected to all other representatives in the annulus between the concentric circles $C_1$ and $C_2$ of radii $c_1/(\eps\,2^\ell)$ and $c_2/(\eps\,2^\ell)$.
Right: Ellipse $B_{ab}$ with foci $a$ and $b$, an $ab$-path of weight $(1+\eps)\|ab\|$. The bold edges make an angle at most $\eps^{1/2}$ with $ab$.}\label{fig:quadtree}
\end{figure}


\begin{theorem}\label{thm:L2woSteiner}
For every constant $d\geq 2$, parameter $\eps>0$, and a sequence of $n\in \N$ points in Euclidean $d$-space, the competitive ratio of the online algorithm above is in $O(\eps^{-(d+1)}\log n)$.
\end{theorem}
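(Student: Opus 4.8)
The plan is to bound $\|E\|=\sum_{\ell\ge 0}\|E_\ell\|$ against $\opt=\opt_n$ by showing that, at each level $\ell$, the edges in $E_\ell$ contribute only $O(\eps^{-d}\,\opt)$, and that only $O(\log n)$ levels carry any significant weight. First I would deal with the number of relevant levels exactly as in the 1-dimensional proof: every edge of $E_\ell$ has weight in $[\frac{c_1 a_0}{\eps 2^\ell},\frac{c_2 a_0}{\eps 2^\ell}]$, so for levels $\ell$ with $a_0/2^\ell$ smaller than $\opt/n^{O(1)}$ the total contribution of $E_\ell$ is dominated by the telescoping geometric series and is $O(\opt)$ overall (using that $|E_\ell|\le \binom{n}{2}$); and for $\ell$ so small that $a_0/2^\ell$ exceeds $\diam(S_n)$ the set $E_\ell$ is empty. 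Hence only $O(\log n)$ levels matter, and it remains to prove
\begin{equation}\label{eq:level-bound-plan}
\|E_\ell\|=O(\eps^{-d}\,\opt)\qquad\text{for each level }\ell .
\end{equation}

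To prove \eqref{eq:level-bound-plan} I would fix a level $\ell$, set $r=a_0/2^\ell$ (the cell side length), so every edge of $E_\ell$ has length $\Theta(\eps^{-1}r)$, and partition $E_\ell$ by \emph{direction}: cover the sphere of directions in $\R^d$ by $O(\eps^{-(d-1)/2})$ cones of angular radius $\Theta(\eps^{1/2})$, giving classes $E_\ell^{(1)},\dots,E_\ell^{(m)}$ with $m=O(\eps^{-(d-1)/2})$. Within one direction class $E_\ell^{(j)}$, consider the ellipsoids $B_{ab}$ for $ab\in E_\ell^{(j)}$; each has great axis $(1+\eps)\|ab\|=\Theta(\eps^{-1}r)$ and, crucially, contains an $ab$-path of $\opt$ of weight at most $(1+\eps)\|ab\|$. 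By the result of Bhore and T\'oth cited in the excerpt, this $\opt$-path contains edges of total weight at least $\frac12\|ab\|=\Omega(\eps^{-1}r)$ that are near-parallel to $ab$, hence — since all edges of $E_\ell^{(j)}$ point in (nearly) the same direction — near-parallel to the common axis direction of the class. So each $ab\in E_\ell^{(j)}$ can be ``charged'' to $\Omega(\eps^{-1}r)$ worth of $\opt$-edges lying inside $B_{ab}$ and making a small angle with the fixed direction of class $j$. The remaining task is to bound the overlap of these charges: I would use a packing argument to show that any fixed point of $\R^d$ (equivalently, any short piece of an $\opt$-edge) lies in only $O(1)$ of the ellipsoids $B_{ab}$ with $ab\in E_\ell^{(j)}$. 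This is where the quadtree structure and the length window $\|ab\|=\Theta(\eps^{-1}r)$ are used: the representatives $s(Q)$ of distinct level-$\ell$ cells are $\Omega(r)$ apart, the endpoints $a,b$ of an edge in $E_\ell^{(j)}$ are confined (because of the fixed direction and fixed length scale) to two well-separated regions of diameter $O(\eps^{-1}r)$, and a single point can be within $O(\eps^{-1}r)$ of only $O(1)$ such cell-representatives after accounting for the direction restriction — an $O(1)$-bounded overlap. Summing, $\|E_\ell^{(j)}\|=O(\opt)$ per direction class, and multiplying by $m=O(\eps^{-(d-1)/2})$ direction classes gives $\|E_\ell\|=O(\eps^{-(d-1)/2}\,\opt)$; combined over $O(\log n)$ levels this would even yield the stronger bound $O(\eps^{-(d-1)/2}\log n)$.

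That last remark signals where the honest accounting — and the main obstacle — lies: getting the clean $O(1)$ overlap bound above is precisely what the Steiner refinement in Section~\ref{sec:L2-upper-withSteiner} exploits, and for the non-Steiner Theorem~\ref{thm:L2woSteiner} the partition is coarser. The safe route, and the one I would actually carry out here, is to partition $E_\ell$ not only by $\Theta(\eps^{1/2})$-cones of directions but also by \emph{location}: group edges whose endpoints lie in a common pair of ``buckets'' of diameter $\Theta(\eps^{-1}r)$ at distance $\Theta(\eps^{-1}r)$ (a WSPD-style pairing at scale $\ell$). For a single (direction, bucket-pair) class, all edges are nearly translates of one another, their ellipsoids essentially coincide up to $O(1)$ overlap, and they collectively have weight $O(\eps^{-1}r)=O(\opt\text{-contribution})$; the number of bucket-pairs touching a fixed level-$\ell$ cell is $O(\eps^{-d})$ (the number of other cells in a ball of radius $O(\eps^{-1}r)$, up to the direction refinement this is $O(\eps^{-d})\cdot O(\eps^{-(d-1)/2})$ but the crude $O(\eps^{-d})$ absorbing all of it suffices). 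Multiplying the per-class bound $O(\opt)$ by the $O(\eps^{-d})\cdot O(\eps)=O(\eps^{-(d+1)})$ total number of classes relevant at one level — or, organized per level, $O(\eps^{-(d+1)})$ classes each contributing $O(\opt/\text{(number of levels a class spans)})$ — and then over $O(\log n)$ levels, yields the claimed $\|E\|=O(\eps^{-(d+1)}\log n)\,\opt$. The main obstacle is thus purely the overlap/packing bookkeeping: one must simultaneously (i) use the near-parallel-edge lemma to convert each spanner edge into a lower-bound charge on $\opt$, and (ii) control how many ellipsoids $B_{ab}$ — across all levels and all direction/location classes — can pile up on a single infinitesimal segment of an $\opt$-edge, which after the direction split costs a factor $O(\eps^{-1})$ beyond the WSPD packing factor $O(\eps^{-d})$, giving exactly the exponent $d+1$.
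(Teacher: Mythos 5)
Your high-level skeleton coincides with the paper's: restrict attention to $O(\log n)$ levels (the contribution of deep levels telescopes to $O(\eps^{-1})\cdot\opt$), fix a level where all edges have length $\Theta(\eps^{-1}r)$ with $r$ the cell size, partition by $\Theta(\eps^{1/2})$-cones of directions ($\Theta(\eps^{(1-d)/2})$ of them), and charge each spanner edge $ab$ to $\Omega(\|ab\|)$ worth of near-parallel $\opt$-edges inside the ellipsoid $B_{ab}$ via the Bhore--T\'oth lemma. Where your argument breaks is exactly the step you flag as ``overlap/packing bookkeeping'': in your ``safe route'' you assert that within one (direction, bucket-pair) class the edges ``collectively have weight $O(\eps^{-1}r)$'' and that their ellipsoids overlap only $O(1)$ times. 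That is unjustified and in general false: a pair of buckets of diameter $\Theta(\eps^{-1}r)$ contains up to $\Theta(\eps^{-d})$ nonempty level-$\ell$ cells, each contributing a representative, so a single class can contain $\mathrm{poly}(1/\eps)$ many edges of length $\Theta(\eps^{-1}r)$ whose ellipsoids all sit on essentially the same region of $\opt$; charging each of them to the same $\Omega(\eps^{-1}r)$ of $\opt$-weight overcounts, and the true ratio of class weight to the locally forced $\opt$-weight can itself be as large as $\Theta(\eps^{-(d+1)})$. The subsequent accounting (``$O(\eps^{-d})\cdot O(\eps)=O(\eps^{-(d+1)})$ classes,'' a per-class bound of $O(\opt)$) is not derived from any concrete packing statement and is arithmetically inconsistent, so the per-level bound $\|E_\ell\|\le O(\eps^{-(d+1)})\,\opt$ is not actually established.

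The paper supplies precisely this missing quantitative step. Within one direction class $E_{\ell,i}$ it proves, by a volume argument, that every ellipsoid $B_{ab}$ intersects only $O(\eps^{-(d+1)})$ other ellipsoids of the class: any intersecting ellipsoid lies in a thickened cylinder $4C_{ab}$ of height $O(\eps^{-1}r)$ and cross-section radius $O(\eps^{-1/2}r)$, which meets only $O(\eps^{-(d+1)/2})$ level-$\ell$ cells, hence only $O(\eps^{-(d+1)})$ representative pairs. Thus the intersection graph of the ellipsoids is $O(\eps^{-(d+1)})$-degenerate, and one extracts an independent subfamily of \emph{pairwise disjoint} ellipsoids of size $\Omega(\eps^{d+1}|E_{\ell,i}|)$; only these disjoint ellipsoids are charged, against the set $E^*_i$ of $\opt$-edges whose direction lies in the doubled cone, giving $\|E_{\ell,i}\|\le O(\eps^{-(d+1)})\|E^*_i\|$, and since each $\opt$-edge lies in $O(1)$ sets $E^*_i$, summing over directions yields $\|E_\ell\|\le O(\eps^{-(d+1)})\,\opt$ and then the theorem over $O(\log n)$ levels. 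To repair your write-up you would need either this disjoint-subfamily extraction or an equivalent bounded-overlap lemma of the same quantitative strength; as written, the proposal has a genuine gap at its central step.
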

\begin{proof}
For the set $S_n$ of the first $n$ points of a sequence in $\R^d$, let $G=(S_n,E)$ be the $(1+\eps)$-spanner produced by the online algorithm, and let $G^*=(S_n,E^*)$ be an $(1+\eps)$-spanner of minimum weight. We show that $\|G\|/ \|G^*\|=O(\eps^{-(d+1)}\log n)$.

\smallskip\noindent\textbf{Short edges.}
Note that the weight of every edge in $E_\ell\subset E$ at level $\ell$ is $\Theta(\eps^{-1}\diam(S_n)/2^\ell)$, since it connects representatives at $\Theta(\eps^{-1}\diam(S_n)/2^\ell)$ distance apart.
In particular, an edge at any level $\ell\geq 2\log n$ has weight at most $O(\eps^{-1}\diam(S_n)/n^2)$; and the total weight of these edges is $O(\eps^{-1}\diam(S_n))\leq O(\eps^{-1}\opt)$. It remains to bound the weight of the edges on levels $\ell=1,\ldots,\lfloor 2\log n\rfloor$. We consider each level separately.

\smallskip\noindent\textbf{Ellipsoids and directions.}
For every edge $ab\in E$, let $B_{ab}$ denote the ellipsoid with foci $a$ and $b$, and great axis of length $(1+\eps)\|ab\|$. Note that every $ab$-path of weight at most $(1+\eps)\|ab\|$ lies in $B_{ab}$.
The set of directions of line segments in $\R^d$ is represented by a hemisphere of $\sphere^{d-1}$. The distance between two directions is measured by angles in the range $[0,\pi)$. Recently, Bhore and T\'oth~\cite[Lemma~3]{BT-oess-21} proved that every $ab$-path of weight at most $(1+\eps)\|ab\|$ contains edges of total weight at least $\frac12 \|ab\|$ that make an angle at most $\eps^{1/2}$ with $ab$ (i.e., they are near-parallel to $ab$); see Fig.~\ref{fig:quadtree}(right).

Since $G^*$ is a $(1+\eps)$-spanner for $S_n$, it contains an $ab$-path of weight at most $(1+\eps)\|ab\|$ for every $ab\in E$. This path lies in the ellipsoid $B_{ab}$, and contains edges of $G^*$ of weight at least $\frac12\|ab\|$ and with direction with at most $\eps^{1/2}$ from $ab$. We next define suitable \emph{disjoint} sets of ellipsoids, in order to establish a lower bound on $\|G^*\|$.

\smallskip\noindent\textbf{Edge partition by directions.}
First, we partition the edge set $E_\ell$ into subsets based on the \emph{directions} of the edges. We use standard volume argument to construct a \emph{homogeneous} set of directions. Let $H\subset \sphere^{d-1}$ be the hemisphere of unit vectors in $\R^d$, then the direction vector of a line segment $ab$, denoted $\dir(ab)$, is a unique point in $H$. Consider a maximal packing of $H$ with (spherical) balls of radius $\frac{1}{8}\eps^{1/2}$. Since the spherical volume of $H$ is $\Theta(1)$ and the volume of each ball is $\Theta(\eps^{(d-1)/2})$, the number of balls is  $K=\Theta(\eps^{(1-d)/2})$.

By doubling the radii of the spherical balls to  $\frac{1}{4}\eps^{1/2}$, we obtain a covering of $H$ with a set of balls $\mathcal{D}=\{D_i: i=1,\ldots, K\}$. For each spherical ball $D_i\in \mathcal{D}$, denote by $2D_i$ the concentric ball of radius $\frac{1}{2}\eps^{1/2}$. By standard packing argument, the ball $2D_i$ intersects only $O(1)$ balls in $\mathcal{D}$ (where $d=O(1)$). We can now define a partition $E_\ell=\bigcup_{i=1}^K E_{\ell,i}$ as follows: let an $ab\in E_{\ell}$ be in $E_{\ell,i}$ if $i$ is the smallest index such that $\dir(ab)\in D_i$.
Now for every $i=1,\ldots , K$, let $E^*_i$ be the set of edges $e^*\in E^*$
such that $\dir(e^*)\in 2D_i$. By construction, every edge $e^*\in E^*$ lies in $O(1)$
sets $E^*_i$; consequently $\sum_{i=1}^K \|E_i^*\| =\Theta(\|G^*\|)$.
Furthermore, for every edge $ab\in E_{\ell,i}$, all edges in $E^*$ that make an angle at most $\eps^{1/2}$ with $ab$ are in $E^*_i$.

\smallskip\noindent\textbf{Disjoint ellipsoids.}
For every $i=1,\ldots ,K$, let $\mathcal{B}_{\ell,i}$ be the set of ellipsoids $B_{ab}$ with $ab\in E_{\ell,i}$. We show that $\mathcal{B}_{\ell,i}$ contains a subset $\mathcal{B}'_{\ell,i}$ of disjoint ellipsoids such that $|\mathcal{B}'_{\ell,i}|\geq \Omega(\eps^{d+1}|\mathcal{B}_{\ell,i}|)$.

We claim that every ellipsoid in $\mathcal{B}_{\ell,i}$ intersects $O(\eps^{-(d+1)})$ other ellipsoids in $\mathcal{B}_{\ell,i}$. We make use of a volume argument. Let $M_\ell=\max\{\|e\|: e\in E_{\ell}\}$; and note that the side length of every cube at level $\ell$ of the quadtree is $\Theta(\eps\,M_\ell)$.

For every ellipsoid $B_{ab}\in \mathcal{B}_{\ell,i}$, the great axis has length $(1+\eps)\|ab\|$, and the $d-1$ minor axes each have length $\sqrt{(1+\eps)^2-1^2}\|ab\|<2 \eps^{1/2}\|ab\|$, where $\|ab\|\leq M_\ell$. Hence $B_{ab}$ is contained in a cylinder $C_{ab}$ of height $(1+\eps)M_\ell$ whose base is a $(d-1)$-dimensional ball of diameter $2\eps^{1/2}M_\ell$. Any other ellipsoid in $\mathcal{B}_{\ell,i}$ with great axis parallel to $ab$ is contained in a translate of $C_{ab}$. If we rotate $B_{ab}$ about its center by an angle at most $\eps^{1/2}$, then its orthogonal projection to the original great axis decreases, and the maximum distance from the original great axis increases by at most $\|ab\|\frac{1+\eps}{2}\sin\eps^{1/2}< M_\ell\eps^{1/2}$. Consequently, every ellipsoid in $\mathcal{B}_{\ell,i}$ is contained in a translated copy of $2C_{ab}$. Hence, every ellipsoid in $\mathcal{B}_{\ell,i}$ that intersects $B_{ab}$ is contained in $3C_{ab}$. Every cube  at level $\ell$ of the quadtree that intersects $3C_{ab}$ is contained in the Minkowski sum of $3C_{ab}$ and such a cube, which is in turn contained in $4C_{ab}$. Note that the volume of the cylinder $4C_{ab}$ is $O(\eps^{(d-1)/2}M^d_\ell)$; while the volume of a cube at level $\ell$ of the quadtree is $\Theta(\eps^d\,M^d_\ell)$. Therefore $4C_{ab}$ contains  $O(\eps^{(d-1)/2}/\eps^d)=O(\eps^{-(d+1)/2})$ such cubes. Recall that the algorithm maintains one representative from each cube, and the edges $ab\in E_{\ell,i}$ are pairs of representative. Thus $O(\eps^{-(d+1)/2})$ representatives in $4C_{ab}$ can form $O(\eps^{-(d+1)})$ pairs (i.e., edges, hence ellipsoids).

This completes the proof of the claim that every ellipsoid in $\mathcal{B}_{\ell,i}$ intersects $O(\eps^{-(d+1)})$ other ellipsoids in $\mathcal{B}_{\ell,i}$. Hence the \emph{intersection graph} of $\mathcal{B}_{\ell,i}$ is $O(\eps^{-(d+1)})$-degenerate;
and has an independent set $\mathcal{B}'_{\ell,i}$ of size  $|\mathcal{B}'_{\ell,i}|\geq \Omega(\eps^{d+1}|\mathcal{B}_{\ell,i}|)
=\Omega(\eps^{d+1} |E_{\ell,i}|)$.

\smallskip\noindent\textbf{Weight analysis.}
As noted above, all edges in $E_\ell$ have length $\Theta(M_\ell)$.
For every $i=1,\ldots , K$ and for every ellipsoid $B_{ab}\in \mathcal{B}_{\ell,i}$,
we have $\|E^*_i\cap B_{ab}\|\geq \frac12 \|ab\|\, \Omega(M_\ell)$.
Summing over a set of disjoint ellipsoids, we obtain
\begin{align*}
\|E^*_i\|
&\geq \sum_{B_{ab}\in \mathcal{B}'_{\ell,i}} \|E^*_i\cap B_{ab}\|
\geq \sum_{B_{ab}\in \mathcal{B}'_{\ell,i}} \frac12 \|ab\| \\
&\geq |\mathcal{B}'_{\ell,i}|\cdot \frac12\, \min\{\|ab\|: ab\in E_{\ell,i}\} \\
&\geq \eps^{-(d+1)} |E_{\ell,i}|\cdot \Omega(M_\ell)
=\Omega(\eps^{-(d+1)} \|E_{\ell,i}\|).
\end{align*}
Summation over all directions $i=1,\ldots , K$ yields
\[\|G^*\| =\Theta\left(\sum_{i=1}^{K}\|E^*_i\|\right)
\geq \Omega\left(\sum_{i=1}^K \eps^{-(d+1)} \|E_{\ell,i}\|\right)
= \Omega(\eps^{-(d+1)} \|E_\ell \|).
\]
Finally, summation over all $\ell\geq 1$ yields
\[
\|E\|
=\sum_{\ell\geq 1}\|E_{\ell}\|
\leq \sum_{\ell=1}^{\lfloor 2\log n\rfloor} \|E_\ell\|
    + \sum_{\ell>\lfloor 2\log n\rfloor} \|E_\ell\|
\leq \eps^{-(d+1)}\|G^*\|\log n + \eps^{-1}\|G^*\|,
\]
as required.
\end{proof}

\subsection{Online Algorithm with Steiner Points}
\label{sec:L2-upper-withSteiner}

When Steiner points are allowed, we can substantially improve the competitive ratio in terms of $\eps$. We describe an algorithm with competitive ratio $O(\eps^{(1-d)/2}\log n)$. As a counterpart, we show in Section~\ref{sec:L2-lower-withSteiner} that the dependence on $n$ is unavoidable in dimensions $d\geq 2$; it remains an open problem whether the dependence on $\eps$ is necessary.


\begin{theorem}\label{thm:L2withSteiner2D}
For every $\eps>0$, an online algorithm can maintain, for a sequence of $n\in \N$ points in the plane, a Euclidean Steiner $(1+\eps)$-spanner of weight $O(\eps^{-1/2}\log n)\cdot \opt$.
\end{theorem}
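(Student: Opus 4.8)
The plan is to build a Steiner spanner $G_2$ in two layers on top of the non-Steiner algorithm from Section~\ref{sec:L2-upper-woSteiner}. The first layer is exactly the quadtree-based $(1+\eps)$-spanner $G_1=(S_n,E)$ analyzed in Theorem~\ref{thm:L2woSteiner}, with its edges grouped by level $E=\bigcup_\ell E_\ell$ and, within a level, by direction into $E_\ell=\bigcup_{i=1}^K E_{\ell,i}$ with $K=\Theta(\eps^{-1/2})$ in the plane. The second layer replaces each edge $ab\in E$ by a path of weight at most $(1+\eps)\|ab\|$ passing through Steiner points; since $G_1$ has stretch $1+\eps$, this yields stretch $(1+\eps)^2<1+3\eps$ for the input pairs, so rescaling $\eps$ gives a $(1+\eps)$-spanner. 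The weight target $O(\eps^{-1/2}\log n)\cdot\opt$ will follow if, for each level $\ell$ and direction class $i$, we can realize all the required $ab$-paths, $ab\in E_{\ell,i}$, inside a single shared Steiner network $N_{\ell,i}$ of weight $O(\eps^{-1/2}\|E_{\ell,i}\|)$ that is built incrementally (only edges added). Summing over the $K=O(\eps^{-1/2})$ direction classes gives $O(\eps^{-1/2})\cdot\sum_i\|E_{\ell,i}\| = O(\eps^{-1/2})\|E_\ell\|$ per level... wait, that would lose an extra $\eps^{-1/2}$; the right accounting is that $N_{\ell,i}$ has weight $O(\|E_{\ell,i}\|)$ times a constant, OR that the per-bucket network has weight $O(\eps^{-1/2}\|ab\|)$ for the \emph{first} edge and only $O(\|a'b'\|)$-order increments thereafter, so that a bucket with many nearby edges amortizes the backbone cost. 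Combined with Theorem~\ref{thm:L2woSteiner}'s bound $\|E_\ell\|\le O(\eps^{-(d+1)})\opt$ per level and the fact that there are $O(\log n)$ relevant levels, one has to be careful that the \emph{Steiner} network weight per level is $O(\log n)\cdot\opt$ overall — the point being that the Steiner layer is \emph{not} a blowup of $G_1$ but is built directly against the geometry, so that the crude $\eps^{-(d+1)}$ from $G_1$ is replaced by $\eps^{(1-d)/2}$.

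The core construction, and the heart of the proof, is the per-bucket Steiner network. Partition $E$ into buckets as in the bucketing discussion: a bucket collects edges of $G_1$ that have (i) pairwise angle $O(\eps^{1/2})$, (ii) weights within a factor $2$, and (iii) endpoints within a bounded number of grid cells of the appropriate scale; the quadtree levels already give (ii) and (iii), and the direction classes $E_{\ell,i}$ give (i). Fix such a bucket $B$ and let $w$ be the common order of magnitude of edge weights and $\hat u$ the common direction. When the first edge $ab\in B$ arrives, lay down a ``backbone'': a short-side grid of spacing $\Theta(\eps^{1/2}w)$ transverse to $\hat u$ on a $\Theta(w)$-length, $\Theta(\eps^{1/2}w)$-wide slab containing $B$'s ellipsoids, together with a \emph{shallow-light tree} (Awerbuch--Berman--Peleg / Khuller--Raghavachari--Young, geometrically optimized by Elkin--Solomon) that spans the $\Theta(\eps^{-1/2})$ transverse grid points and has both small total weight, $O(\eps^{-1/2}w)=O(\eps^{-1/2}\|ab\|)$, and root-to-leaf stretch $1+\eps$ from the appropriate ``entry'' points. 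For a subsequent edge $a'b'\in B$, attach a ``connector'': short segments from $a'$ and $b'$ to the nearest backbone grid points (length $O(\eps^{1/2}w)=O(\|a'b'\|)$), so that the $a'b'$-path goes connector $\to$ along $\hat u$ within the slab $\to$ connector; since the along-$\hat u$ portion has length $(1+O(\eps))\|a'b'\|$ (the transverse detours cost only $O(\eps\|a'b'\|)$ because widths are $O(\eps^{1/2}w)$ and there are $O(1)$ of them per path after rescaling — this is exactly the ellipsoid/near-parallel budget from \cite[Lemma~3]{BT-oess-21}), the path weight is $(1+\eps)\|a'b'\|$. Summing connector costs over all $a'b'\in B$ gives $O(\|B\|)$, and the single backbone costs $O(\eps^{-1/2})$ times one edge weight, which is $O(\eps^{-1/2}/|B|)$ times $\|B\|$ if $|B|$ is large, and is simply absorbed into $\opt$ per level via the counting of Theorem~\ref{thm:L2woSteiner} when $|B|$ is small — the total number of buckets per level is $O(K)=O(\eps^{-1/2})$, and $\|E_\ell\|\le O(\eps^{-(d+1)})\opt$, so $\sum_{\text{buckets in level }\ell}\eps^{-1/2}w = O(\eps^{-1/2})\cdot O(\opt)$ after noticing each bucket's $w$ is dominated by a disjoint ellipsoid contribution to $\opt$ à la the Theorem~\ref{thm:L2woSteiner} analysis. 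Adding up over $O(\log n)$ levels yields $\|G_2\| = O(\eps^{-1/2}\log n)\cdot\opt$.

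The main obstacle — and where the argument needs the most care — is the incremental/online realizability together with the stretch bookkeeping: the backbone for a bucket $B$ must be fixed when $B$'s \emph{first} edge arrives, before the algorithm knows which other edges will land in $B$, yet it must be wide enough and have the right entry structure so that \emph{every} later $a'b'\in B$ gets a $(1+\eps)\|a'b'\|$ path using only cheap connectors (no backbone edits). This is exactly why the slab width is tied to $w$ (an upper bound for the bucket) rather than to the individual $\|a'b'\|$, and why the shallow-light tree must guarantee stretch from \emph{all} transverse grid points simultaneously, not just one root; fitting the Elkin--Solomon shallow-light construction to produce a single tree of weight $O(\eps^{-1/2}w)$ with the multi-terminal $1+\eps$ stretch guarantee is the technical crux. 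A secondary subtlety is ensuring the near-parallel edge budget (at most $O(1)$ transverse ``hops'' of length $O(\eps^{1/2}w)$ per path after the final $\eps$-rescaling) so that the accumulated transverse detour stays within $O(\eps)\|a'b'\|$; this is handled by choosing the constants in the grid spacing and in the bucket's angular width (both $\Theta(\eps^{1/2})$) small enough and invoking \cite[Lemma~3]{BT-oess-21}. Once these are in place, the summation over directions, levels, and the short-edge tail is routine, mirroring the end of the proof of Theorem~\ref{thm:L2woSteiner}.
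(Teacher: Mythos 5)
Your plan is in essence the paper's: a primary non-Steiner spanner $G_1$, buckets of edges with similar level, direction, and location, a per-bucket backbone built when the bucket's first edge arrives plus cheap connectors for later edges, shallow-light trees, and a per-bucket charge of $\Omega(w)$ against $\opt$. However, the quantitative core — the stretch bound — has a genuine gap as written. With backbone grid spacing $\Theta(\eps^{1/2}w)$, a connector from $a'$ to the nearest grid point can be essentially orthogonal to the bucket direction and of length $\Theta(\eps^{1/2}w)=\Theta(\eps^{1/2}\|a'b'\|)$; such a hop adds $\Theta(\eps^{1/2}\|a'b'\|)$ to the path weight, i.e.\ stretch $1+\Omega(\eps^{1/2})$, not $1+O(\eps)$. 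Your parenthetical claim that transverse detours cost only $O(\eps\|a'b'\|)$ is valid only when a transverse displacement of extent $O(\eps^{1/2}w)$ is realized as a near-parallel segment (or a $(1+\eps)$-approximation of one) spread over $\Omega(w)$ of longitudinal travel; Lemma~3 of \cite{BT-oess-21} is a lower-bound tool and does not hand you such a path — it must be built into the backbone. The parameters are also internally inconsistent: a grid of spacing $\eps^{1/2}w$ on a $\Theta(w)\times\Theta(\eps^{1/2}w)$ slab has weight $O(w)$ and only $O(1)$ transverse positions, not weight $O(\eps^{-1/2}w)$ and $\Theta(\eps^{-1/2})$ of them. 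The paper resolves exactly this tension: the grid spacing equals the level-$\ell$ cell size $\Theta(\eps w)$, so connectors cost $O(\eps w)=O(\eps\|a'b'\|)$ and the grid weighs $\Theta(\eps^{-1/2}w)$ per bucket, while the large transverse moves (up to $\eps^{1/2}w$, toward the median line of the rectangle) are delegated to shallow-light trees between the transverse sides of $\eps^{1/2}w\times\eps^{1/2}w$ subsquares and roots on the median at longitudinal distance $\Theta(w)$; their root-to-leaf paths are $(1+\eps)$-approximations of segments making angle $O(\eps^{1/2})$ with $L$, which is where the amortization you invoke actually happens. This is an off-the-shelf use of the standard shallow-light-tree guarantee (weight $O(w)$ per subsquare, $O(\eps^{-1/2})$ subsquares per bucket), not the new multi-terminal variant you flag as an unresolved crux.

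The competitive accounting also needs repair. The number of nonempty buckets at a level is not $O(K)=O(\eps^{-1/2})$ (each direction class has many rectangles), and routing through the bound $\|E_\ell\|\leq O(\eps^{-(d+1)})\opt$ of Theorem~\ref{thm:L2woSteiner} cannot give the claimed ratio. The correct charge is per bucket: a nonempty bucket $U$ costs the algorithm $O(\eps^{-1/2}w)$ (backbone plus all connectors), and the ellipse of its first edge $ab$ is contained in the doubled rectangle $2R(U)$, so the optimum's $(1+\eps)$-approximate $ab$-path lies there and carries weight at least $\tfrac12\|ab\|=\Omega(w)$ in directions within $2\eps^{1/2}$ of $L$. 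Since, for a fixed direction class, the doubled rectangles cover each point $O(1)$ times, and each optimum edge is near-parallel to only $O(1)$ direction classes, these charges have multiplicity $O(1)$; hence $\|E_\ell\|\leq O(\eps^{-1/2})\cdot\opt$ per level, and summing over the $O(\log n)$ relevant levels (plus the short-edge tail) gives $O(\eps^{-1/2}\log n)\cdot\opt$. In particular, no independent-set or packing argument over ellipsoids is needed here — one ellipse per bucket, made disjoint (up to $O(1)$ overlap) by the rectangle tiling, suffices.
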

\begin{proof}
Our online algorithm has two \emph{stages}: $A_1$ and $A_2$. Algorithm $A_1$ is the same as in Section~\ref{sec:L2-upper-woSteiner}, it maintains a quadtree $\mathcal{T}_n$ for the point set $S_n$, and a ``primary'' $(1+\eps)$-spanner $G_1$ \emph{without} Steiner points. Algorithm $A_2$ maintains a Steiner $(1+3\eps)$-spanner $G_2$ as follows: for each edge $ab$ in $G_1$, it creates an $ab$-path of length $(1+\eps)\|ab\|$ using Steiner points in $G_2$.
Importantly, algorithm $A_2$ can bundle together ``similar'' edges of $G_1$, and handle them together using shallow-light trees~\cite{Solomon15}.

In particular, we partition the space of all
possible edges of $G_1$ into \emph{buckets} (edges with similar directions, locations, and weights). For each bucket $U$, when algorithm $A_1$ inserts the first edge $ab\in U$ into $G_1$, then algorithm $A_2$ creates a ``backbone'' Steiner tree $T=T(U)$ of weight $O(\|ab\|)$, which contains an $ab$-path of length at most $(1+\eps)\|ab\|$.
For any subsequent edge $a'b'\in U$, is suffices to add paths from $a'$ and $b'$ to $T$, of weight $O(\eps\,\|ab\|)$, to obtain $a'b'$-path of length at most  $(1+\eps)\|a'b'\|$.
Overall, between any two points $s_i,s_j\in S$, the primary spanner contains a path of weight at most $(1+\eps)\|s_is_j\|$, and $G_2$ contains an Steiner path of weight at most $(1+\eps)^2\|s_is_j\|<(1+3\eps)\|s_is_j\|$, as claimed.

It remains to define the buckets $U$, the backbone $T(U)$ for the first edge in $U$, and the ``connectors'' added for each subsequent edge in $U$.
We first describe the algorithm in the plane, where we establish a competitive ratio $O(\eps^{-1/2}\log n)$, and then generalize the construction to higher dimensions.
%


\smallskip\noindent\textbf{Buckets.} We define buckets for all potential edges in the primary spanner $G_1$. We analyze a single level $\ell$ of the quadtree $\mathcal{T}$. Without loss of generality, assume that the side length of all quadtree cubes in level $\ell$ have unit length, hence the weight of every edge in $E_\ell$ is $\Theta(\eps^{-1})$.

In Section~\ref{sec:L2-upper-withSteiner}, we have covered the set $H\subset \sphere^{1}$ of directions with a set $\mathcal{D}=\{D_i:i=1,\ldots K\}$ of balls of diameter $\eps^{1/2}$. For each ball in $\mathcal{D}$, we define a set of buckets. Let $D\in \mathcal{D}$, and let $L$ be a line such that $\dir(L)$ corresponds to the center of $D$; refer to Fig.~\ref{fig:stripes}(left). Partition the plane into parallel strips of width $\frac12\, \eps^{1/2}$ by a set of lines parallel to $L$; and partition each strip further into rectangles of height $2\eps^{-1}$. By scaling up the rectangles by a factor of 2, we obtain a covering of the square $Q_0$ with a set $\mathcal{R}$ of $4\eps^{-1}\times \eps^{1/2}$ rectangles such that each point is covered by $O(1)$ rectangles in $\mathcal{R}$.

For each rectangle $R\in \mathcal{R}$, we create a bucket $U$ comprising all edges $ab\in E_\ell$ such that $ab\subset R$ and $\dir(ab)\in D$ (hence $\angle(\dir(ab),\dir(L))\leq \eps^{1/2}$). Note that every edge $ab\in E_{\ell}$ lies in at least one and at most $O(1)$ buckets.

\begin{figure}[htbp]
	\centering
 \includegraphics[width=0.9\textwidth]{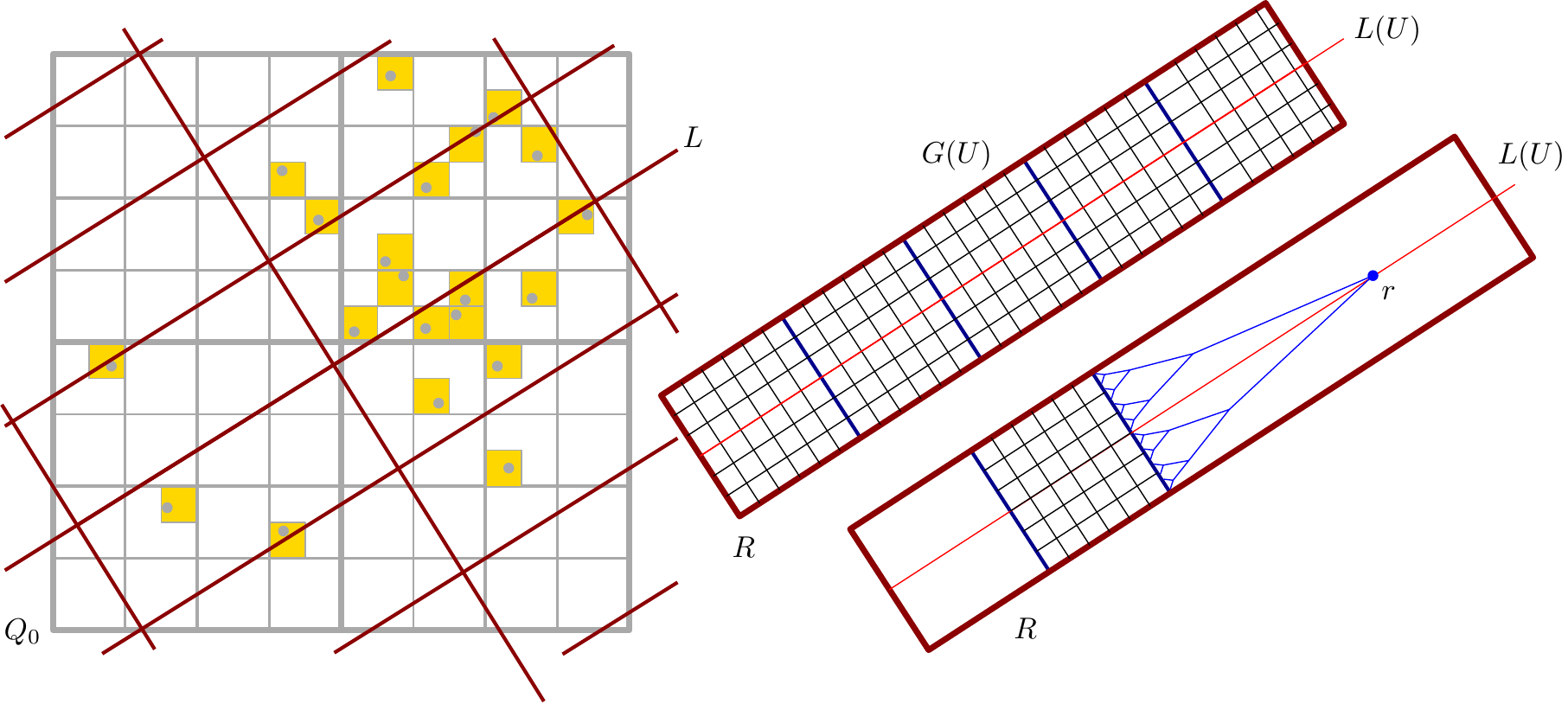}
\caption{Left: The overlay the the quadtree with a partition of $\R^2$ into  $\frac12\eps^{-1/2}\times 2\eps^{-1}$ rectangles aligned with $L$.
Top-Right: A rectangle $R\in \mathcal{R}$, the median $L(U)$, the grid $G(U)$, and the partition of $R$ into $\eps^{-1/2}\times \eps^{-1/2}$ squares.
Bottom-Right: A shallow-light tree between a side of an $\frac12\eps^{-1/2}\times \frac12\eps^{-1/2}$ square and a source $r\in L(U)$.}\label{fig:stripes}
\end{figure}

\smallskip\noindent\textbf{Backbones and Connectors.}
Let $U$ be a bucket defined above for a rectangle $R\in \mathcal{R}$. Let $L(U)$ denote the median of the rectangle $R$ parallel to $L$. When the primary algorithm $A_1$ inserts the first edge $ab\in U$ into $G_1$, then Algorithm $A_2$ constructs a unit grid graph $G(U)$, formed by a subdivision of $R$ into unit squares; see Fig.~\ref{fig:stripes}(top-right). Since $R$ is a $4\eps^{-1}\times \eps^{-1/2}$ rectangle, $\|G(U)\|= O(\eps^{-3/2})$.
Furthermore, we partition $R$ into $4\eps^{-1/2}$ squares of side length
$\eps^{-1/2}$. For each such square, we insert two shallow-light trees~\cite{Solomon15} between the two sides of the square orthogonal to $L$ and two points in $L(U)$ at distance $\eps^{-1}$ from the square on either side; Fig.~\ref{fig:stripes}(bottom-right).
The weight of each shallow-light tree is $O(\eps^{-1})$~\cite{Solomon15}, and so the combined weight of $O(\eps^{-1/2})$ shallow-light trees is $O(\eps^{-3/2})$.
The grid $G(U)$ together with the shallow-light trees forms the \emph{backbone} for the bucket $U$ in $G_2$.

We add \emph{connector} edges between $a$ (resp., $b$) and the four corners of unit square of the grid $G(U)$ that contains it.
For any subsequent edge $a'b'\in U$ that algorithm $A_1$ inserts into $G_1$,
the backbone does not change, we only add connectors between $a'$ (resp., $b'$)
and the four corners of the unit square in $G(U)$ that contains it. The weight of
the four connectors is $O(1)$ per point. Since $\text{area}(R)=\Theta(\eps^{-3/2})$,
then $R$ intersects at most $O(\eps^{-3/2})$ unit squares of the quadtree at level $\ell$, and so the total weight of all connectors is $O(\eps^{-3/2})$, as well.

\smallskip\noindent\textbf{Stretch analysis.} Suppose algorithm $A_1$ inserts an edge $cd$ into $G_1$. As noted above, $cd$ lies in $\Theta(1)$ buckets; refer to Fig.~\ref{fig:stretch}. Suppose bucket $U$ contains $cd$; and in the partition of the rectangle $R=R(U)$, the endpoint $c$ ($d$) lies squares $R_c$ ($R_d$) of side length $\eps^{-1/2}$, associated with shallow-light trees rooted at $r_c$ ($r_d$). Then $G_2$ contains a $cd$-path comprised of: (i) connectors from $c$ and $d$, resp., to the closest point in the grid $G(U)$; (ii) paths in $G(U)$ from the connectors to the boundary of squares $R_c$ and $R_d$, (iii)  paths along the shallow-light trees to the roots $r_c,r_d\in L(U)$, and (iv) the line segment $r_c r_d$ in $G(U)$. The weight of each connector in (i) is at most $2\sqrt{2}$, which is bounded by $O(\eps)\|cd\|$ since  $\|cd\|=\Theta(\eps^{-1})$. The edges in (ii) and (iv) are parallel to $L$, hence they make an angle less than $\eps^{1/2}$ with $cd$. Finally, consider the two subpaths in part (iii) in shallow-light trees: The line segment between the two endpoints of each such subpath makes an angle less than $\eps^{1/2}$ with $L$, hence less than $2\eps^{1/2}$ with $cd$; and the weight of a root-to-leaf path in a shallow-light tree is a $(1+\eps)$-approximation of the straight-line segment between its endpoints. Overall, the total weight of the $cd$-path described above is $(1+O(\eps))\|cd\|$, as required.

\begin{figure}[htbp]
	\centering
 \includegraphics[width=0.98\textwidth]{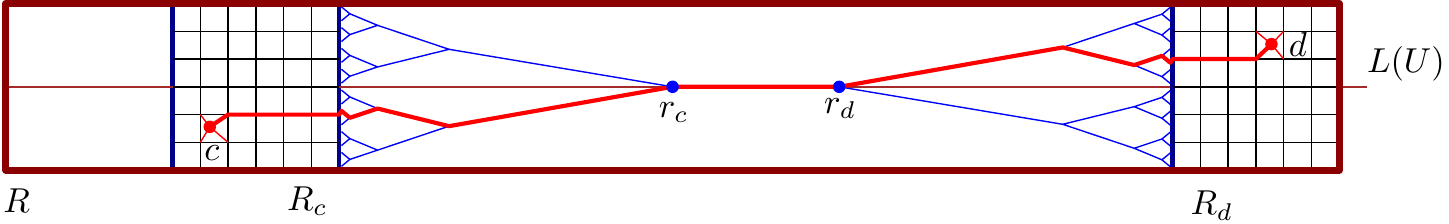}
\caption{A $cd$-path in the Steiner spanner $G_2$.}\label{fig:stretch}
\end{figure}

For every point pair $a,b\in S_n$, the primary graph $G_1$ contains an $ab$-path $P=(p_0,\ldots , p_m)$ of length $\|P\|\leq (1+\eps)\|ab\|$, since $G_1$ is a $(1+\eps)$-spanner. We have shown that for every edge $p_{i-1}p_i$ of $G_1$, the Steiner spanner $G_2$ contains a $p_{i-1}p_i$-path of weight $(1+O(\eps))\|p_{i-1}p_i\|$. The concatenation of these paths yields an $ab$-path in $G_2$, of weight $\sum_{i=1}^m (1+O(\eps))\|p_{i-1}p_i\| =(1+O(\eps))\|P\| = (1+O(\eps))(1+\eps)\|ab\| = (1+O(\eps))\|ab\|$.

\subparagraph{Competitive Analysis.}
Denote by $E_\ell$ the set of edges of $G_2$ added at level $\ell=1,\ldots  ,2\log n$, and let $b_\ell$ be the number of nonempty buckets at level $\ell$. We have seen that for each nonempty bucket at level $\ell$, $E_\ell$ contains a subgraph of weight $O(\eps^{-3/2} \diam(S_n)/2^\ell)$; hence $\|E_\ell\|\leq O(b_\ell\cdot \eps^{-3/2} \diam(S_n)/2^\ell)$.

Let $G^*=(S_n,E^*)$ the a Euclidean Steiner $(1+\eps)$-spanner for $S_n$ of minimum weight $\opt$. Consider a nonempty bucket $U$ associated with a line $L$ and a rectangle $R(U)$. Since $U$ is nonempty, there is an edge $ab\in U$ in $G_1$. Recall that $ab\in R$ and $\angle(\dir(ab),\dir(L))\leq \eps^{1/2}$. Since $G^*$ is a $(1+\eps)$-spanner, it contains an $ab$-path $P_{ab}$ of weight at most $(1+\eps)\|ab\|$. As noted in Section~\ref{sec:L2-upper-woSteiner}, $P_{ab}$ lies in the ellipse $B_{ab}$, and contains edges of weight at least $\frac12\|ab\|$ that make an angle at most $\eps^{1/2}$ with $ab$. All points in the ellipse $B_{ab}$ are at distance less than $\eps^{1/2}$ from the the line segment $ab$. The segment $ab$ lies in the $4\eps^{-1}\times \eps^{1/2}$ rectangle $R(U)$. Thus we have $P_{ab}\subset B_{ab}\subset 2R(U)$, and so $2R(U)$ contains edges of $G^*$ of weight $\frac12\|ab\|=\Omega(\eps^{-1}\diam(S_n)/2^\ell)$ whose directions are within $2\eps^{1/2}$ from $L$; denote by $E^*(U)\subset E^*$ the set of these edges. By construction, each edge $e^*$ of $G^*$ lies in $E^*(U)$ for only $O(1)$ buckets. Indeed, there  are $O(1)$ lines $L'$ with $\angle(\dir(L),\dir(L))\leq 2\eps^{1/2})$, and for each such direction $L'$, every point in $\R^2$ lies in $O(1)$ rectangles $2R(U')$ aligned with $L'$. We conclude that $\opt = \|G^*\| = \Omega(b_\ell\cdot \eps^{-1}\diam(S_n)/2^\ell)$.
This implies $\|E_\ell\|/\opt\leq O(\eps^{-1/2})$ for $\ell=1,\ldots ,2\log n$.
Summation over all levels yields
\[\frac{\alg}{\opt}
= \frac{\sum_{\ell=1}^\infty \|E_\ell\|}{\opt}
\leq \sum_{\ell=1}^{2\log n}O(\eps^{-1/2}) +O(1) =O(\eps^{-1/2}\log n),\]
as claimed.
\end{proof}

\subparagraph{Generalization to $\R^d$.} Our algorithm and its analysis generalize to Euclidean $d$-space.

\begin{theorem}\label{thm:L2withSteiner}
For every $\eps>0$, an online algorithm can maintain, for a sequence of $n\in \N$ points in $\R^d$, a Euclidean Steiner $(1+\eps)$-spanner of weight $O(\eps^{(1-d)/2}\log n)\cdot \opt$.
\end{theorem}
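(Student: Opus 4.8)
The plan is to lift the planar construction of Theorem~\ref{thm:L2withSteiner2D} to $\R^d$, keeping its two-stage structure intact. Stage $A_1$ is the quadtree algorithm of Section~\ref{sec:L2-upper-woSteiner}, which maintains a quadtree $\mathcal{T}_n$ and a non-Steiner $(1+\eps)$-spanner $G_1$; stage $A_2$ maintains a Steiner graph $G_2$ containing, for every edge $ab$ of $G_1$, an $ab$-path of weight at most $(1+\eps)\|ab\|$. Composing such paths along a $(1+\eps)$-stretch $G_1$-path shows $G_2$ is a $(1+O(\eps))$-spanner for $S_n$, and running the algorithm with $\eps$ scaled down by an absolute constant yields stretch exactly $1+\eps$ at the cost of an $O(1)$ factor in the weight bound. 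As in the plane it suffices to treat one quadtree level $\ell$ at a time: normalizing level-$\ell$ cubes to unit cubes, every edge of $E_\ell$ has weight $\Theta(\eps^{-1})$, and the edges created on levels $\ell>2\log n$ are handled exactly as in Theorem~\ref{thm:L2woSteiner}.

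The buckets are the $d$-dimensional analogue of the planar ones. First cover the hemisphere $H\subset\sphere^{d-1}$ of directions by $K=\Theta(\eps^{(1-d)/2})$ balls of diameter $\eps^{1/2}$, as in Section~\ref{sec:L2-upper-woSteiner}. For each such ball $D$, with a line $L$ realizing its centre direction, tile $\R^d$ by $L$-aligned ``tubes'' of extent $\Theta(\eps^{-1})$ along $L$ and $\Theta(\eps^{-1/2})$ in each of the $d-1$ transverse directions, so that every point of $\R^d$ is covered by $O(1)$ tubes; a bucket $U$ consists of the edges $ab\in E_\ell$ with $ab\subset R(U)$ and $\dir(ab)\in D$. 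Every edge of $E_\ell$ then lies in at least one and at most $O(1)$ buckets, and since the ellipsoid $B_{ab}$ of an edge $ab\in U$ has minor axes $O(\eps^{1/2}\|ab\|)=O(\eps^{-1/2})$, we have $B_{ab}\subset 2R(U)$.

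When $A_1$ inserts the first edge of a bucket $U$, stage $A_2$ builds its \emph{backbone}: (i) a unit grid $G(U)$ subdividing the tube $R(U)$, of weight $\Theta(\eps^{-(d+1)/2})$ (the volume of $R(U)$); and (ii) for each of the $\Theta(\eps^{-1/2})$ subcubes of side $\eps^{-1/2}$ into which $R(U)$ is sliced along $L$, a pair of shallow-light trees joining the two faces of the subcube orthogonal to $L$ (together with their $G(U)$-grid points) to roots on the median line $L(U)$ at distance $\Theta(\eps^{-1})$. Using the Euclidean Steiner shallow-light trees of Elkin and Solomon~\cite{elkin2015steiner,Solomon15} --- applied recursively along the transverse directions if needed, so as to reach every face grid point with radial stretch $1+\eps$ --- their total weight is $O(\eps^{-(d+1)/2})$, hence the backbone of $U$ weighs $O(\eps^{-(d+1)/2})$. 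For each edge $a'b'\in U$ we add, exactly as in the plane, the $O(1)$ connector edges from $a'$ and $b'$ to the corners of the unit cells of $G(U)$ containing them; since $R(U)$ has $O(\eps^{-(d+1)/2})$ unit cells, all connectors over the lifetime of $U$ weigh $O(\eps^{-(d+1)/2})$. The stretch check is unchanged: an edge $cd\in U$ of $G_1$ is served in $G_2$ by a path made of $O(1)$-weight connectors (which is $O(\eps)\|cd\|$ since $\|cd\|=\Theta(\eps^{-1})$), grid paths and the inter-root segment (parallel to $L$, hence within angle $\eps^{1/2}$ of $cd$), and two root-to-face shallow-light paths (each a $(1+\eps)$-approximation of a segment within angle $2\eps^{1/2}$ of $cd$), of total weight $(1+O(\eps))\|cd\|$.

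For the competitive ratio, let $b_\ell$ be the number of nonempty buckets on level $\ell$; the backbone-plus-connector bound gives $\|E_\ell\|=O\left(b_\ell\,\eps^{-(d+1)/2}\,\diam(S_n)/2^\ell\right)$. Conversely, fix a minimum-weight Steiner $(1+\eps)$-spanner $G^*=(S_n,E^*)$. For each nonempty bucket $U$ with witness edge $ab$, $G^*$ contains an $ab$-path of weight at most $(1+\eps)\|ab\|$, which by~\cite[Lemma~3]{BT-oess-21} contains edges of weight at least $\frac12\|ab\|$ that make angle at most $\eps^{1/2}$ with $ab$; these edges lie in $B_{ab}\subset 2R(U)$. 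The planar charging argument carries over up to exponents: there are $O(1)$ direction balls within angle $2\eps^{1/2}$ of $\dir(L)$ among our $\Theta(\eps^{1/2})$-separated centres, and a segment of length $O(\eps^{-1})$ lies in $O(1)$ tubes aligned with any one of them, so each edge of $G^*$ is charged to only $O(1)$ buckets. Hence $\opt=\|G^*\|=\Omega\left(\sum_U\frac12\|ab\|\right)=\Omega\left(b_\ell\,\eps^{-1}\,\diam(S_n)/2^\ell\right)$, which gives $\|E_\ell\|/\opt=O(\eps^{(1-d)/2})$ for each of the $O(\log n)$ relevant levels and $\alg/\opt=O(\eps^{(1-d)/2}\log n)$ after summing. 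I expect step~(ii) of the backbone to be the main obstacle: in the plane each shallow-light tree spans a one-dimensional terminal set (a side of a subcube), to which the Elkin--Solomon bound applies directly, whereas in $\R^d$ the face is $(d-1)$-dimensional, a naive product of planar shallow-light trees is too heavy, and one must either invoke the sharpened $\eps$-dependence of $d$-dimensional geometric shallow-light trees or recurse on the transverse dimension to keep the per-bucket backbone weight at $O(\eps^{-(d+1)/2})$. Everything else --- the bucket count, the $O(1)$-overlap properties, the stretch composition, and the disjoint-ellipsoid lower bound on $\opt$ --- transfers from the planar argument with only the exponents of $\eps$ adjusted.
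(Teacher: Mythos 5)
Your proposal follows essentially the same route as the paper's (sketched) proof: the same direction-ball covering, $\eps^{-1}\times\eps^{-1/2}\times\cdots\times\eps^{-1/2}$ bucket hyperrectangles, unit-grid-plus-shallow-light-tree backbones of weight $O(\eps^{-(d+1)/2})$, $O(1)$-weight connectors, the same stretch composition, and the same per-level charge of $\Omega(\eps^{-1})$ weight of $\opt$ per nonempty bucket via \cite[Lemma~3]{BT-oess-21}. The one point you flag --- that in $\R^d$ the shallow-light trees must reach a $(d-1)$-dimensional face rather than a segment while keeping the per-bucket weight at $O(\eps^{-(d+1)/2})$ --- is exactly the detail the paper's proof sketch also leaves implicit ("analogous to the planar construction"), so your treatment is no less complete than the paper's.
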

\begin{proof}[Proof sketch.]
The proof is analogous to that of Theorem~\ref{thm:L2withSteiner2D}, we highlight only the differences in the algorithm and its analysis. The buttleneck of the competitive analysis is the size of the unit grids $G(U)$ which is $\Theta(\eps^{-(d+1)/2})$ in $\R^d$, which is contrasted with a path of weight $\Omega(\eps^{-1})$ in $\opt$.

Similarly to Section~\ref{sec:L2-upper-woSteiner}, we choose a homogeneous set $D$ of $\Theta(\eps^{(1-d)/2})$ directions (i.e., any direction is within
angle $\eps^{1/2}$ from a direction in $D$, and the angle between any two directions in $D$ is at least $\frac12\eps^{1/2}$). For each direction $L\in D$, we construct a tiling of $\R^d$ with congruent hyper-rectangles aligned with $L$ of dimensions $\eps^{-1}\times \eps^{-1/2}\times \ldots \eps^{-1/2}$; and a covering of $Q_0$ after scaling up the hyperrectangles by a factor of 2. We associate a bucket to each hyperrectangle $R$ in the covering: an edge $ab$ of $G_1$ is in bucket $U$ if $ab\subset R$ and $\angle (\dir(ab),\dir(L))\leq \eps^{1/2}$. The construction ensures that every edge $ab\in E_{\ell}$ is in at least one bucket, and at most $O(1)$ buckets.

For each nonempty bucket $U$, the grid $G(U)$, shallow-light trees~\cite{Solomon15}, and the connectors are analogous to the planar construction. However, the weight of the unit grid is $\|G(U)\|=\Theta(\eps^{-(d+1)/2})$ in $\R^d$. The stretch analysis carries over to higher dimensions. The lower bound for $\opt$ is the same as in the plane: for each nonempty bucket at level $\ell$, the rectangle $2R(U)$ contains edges of $G_2$ of weight $\Omega(\eps^{-1}\diam(S_n)/2^\ell)$ with direction within $2\eps^{1/2}$ from $L$.
This yields an upper bound $\|E_\ell\|/\opt\leq O(\eps^{(1-d)/2})$ for levels $\ell=1,\ldots ,2\log n$; and $\alg/\opt\leq O(\eps^{(1-d)/2}\log n)$ overall.
\end{proof}

\section{Lower Bound with Steiner Points}
\label{sec:L2-lower-withSteiner}

Recall that when Steiner points are allowed, the algorithm may subdivide existing edges with Steiner points. It follows that the in one-dimension, an online algorithm can maintain a Hamiltonian path on $S_n$, which is the minimum $(1+\eps)$ spanner for all $\eps\geq 0$. This property carries over to Euclidean Steiner $1$-spanners (i.e., the case  $\eps=0$), where we need to maintain the complete straight-line graph on $n$ points. However, we show that for $\eps>0$ in dimensions $d\geq 2$, the competitive ratio of an online algorithm with Steiner points must depend on $n$.

\begin{theorem}\label{thm:w-SP-LB}
For every $\eps>0$, the competitive ratio of any online algorithm that maintains a Euclidean Steiner $(1+\eps)$-spanner for a sequence of $n$ points in $\R^d$ is $\Omega(f(n))$ for some function $f(n)$ such that $\lim_{n\rightarrow \infty}f(n)=\infty$.
\end{theorem}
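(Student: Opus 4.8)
The plan is to construct an adaptive adversary in the plane ($d=2$; the bound extends trivially to higher dimensions by embedding) that forces any online Steiner $(1+\eps)$-spanner to accumulate weight growing with $n$, while keeping $\opt$ bounded by a constant. The guiding principle, as sketched in the introduction, is to maintain after each round $i$ the invariant that the optimum Steiner spanner $\opt_i$ is an $x$-monotone polygonal path $\gamma_i$ of weight $O(1)$; this is possible because $\gamma_i$ automatically realizes every pairwise distance with stretch $1$ when all presented points lie on (or extremely close to) $\gamma_i$. So the adversary will keep all points within a tiny neighborhood of an evolving monotone curve, and recursively subdivide. Concretely, I would proceed in \emph{rounds}: the adversary maintains a collection of pairwise-disjoint small ``active'' rectangles (or trapezoids), one per ``branch'' of the current curve $\gamma_i$; in each round it subdivides each active rectangle by placing a new pair of points $a,b$ near the two ends of the portion of $\gamma_i$ crossing that rectangle, chosen so that the ellipse $B_{ab}$ (which must contain the algorithm's $(1+\eps)$-stretch $ab$-path) is contained in that rectangle and is disjoint from the ellipses used in all other active rectangles and in all previous rounds.

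The key step is the disjointness / charging argument. Because the algorithm cannot delete edges, every edge it ever placed to service some earlier pair $a'b'$ lies (by the ellipse lemma, stated earlier in the excerpt) inside the corresponding ellipse $B_{a'b'}$, and by construction that ellipse is disjoint from the current active rectangle. Hence when the new pair $a,b$ arrives, \emph{none} of the algorithm's existing weight can help: the algorithm must add fresh edges of total weight at least, say, $\tfrac12\|ab\|$ lying inside $B_{ab}$ (invoking the ``near-parallel edges of weight $\ge \tfrac12\|ab\|$'' lemma of Bhore--T\'oth restated in Section~\ref{sec:L2-upper-woSteiner}). Summing the lengths $\|ab\|$ over all pairs presented across all rounds then gives a lower bound $\alg \ge \Omega(1)\cdot \sum \|ab\|$. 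Meanwhile $\opt_i = \|\gamma_i\| = O(1)$ throughout, because each subdivision replaces a monotone subsegment of $\gamma_{i-1}$ by a monotone detour through the two new points whose extra length we can make summably small (e.g. the detour at depth $j$ adds at most $2^{-j}$ per active rectangle, and there are $\le 2^j$ of them, but we control the geometry so the \emph{total} excess over all rounds is a convergent geometric series, $O(1)$). Thus $\alg/\opt \ge \Omega(\sum \|ab\|) = \Omega(f(n))$, where $f(n)$ is the number of rounds (equivalently, a logarithm of $n$ if each round roughly doubles the point count, or any slowly growing function depending on the branching schedule), and $f(n)\to\infty$.

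The main obstacle, and where most of the technical care goes, is maintaining \emph{all three} constraints simultaneously: (i) the ellipses $B_{ab}$ across all rounds and branches are pairwise disjoint (so no reuse of edges), (ii) the running optimum stays the monotone path $\gamma_i$ of bounded weight (so that a new point never creates a shortcut that would let $\opt$ drop or, worse, let the algorithm's old edges become useful), and (iii) enough ``room'' remains to keep subdividing for $f(n)\to\infty$ rounds. Constraint (ii) is delicate because, as the introduction observes, $\opt$ is \emph{not} monotone in the Steiner setting in general; the adversary must never hand the algorithm a configuration whose true optimum Steiner spanner is something cheaper and geometrically spread out. Keeping every point within distance $\delta_i \to 0$ of $\gamma_i$ (with $\delta_i$ shrinking fast enough, e.g.\ $\delta_i \ll \eps \cdot (\text{minimum feature size at depth }i)$) is what guarantees the monotone path is still optimal and that the new ellipses fit inside the active cells. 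A secondary subtlety is the dependence on $\eps$: the width one can afford for each ellipse scales like $\eps^{1/2}$ times its length, so the branching factor (how many disjoint child ellipses fit inside a parent active rectangle) depends on $\eps$; this only affects the \emph{rate} of growth of $f(n)$, not the qualitative conclusion, so I would state the theorem with an unspecified $f(n)\to\infty$ and remark that for fixed $\eps$ one in fact obtains $f(n) = \Omega(\log\log n)$ or a similar explicit bound from the recursion, leaving the optimal $\eps$- and $n$-dependence as the open problem flagged in the introduction.
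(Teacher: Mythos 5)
Your overall architecture (adaptive adversary, all points kept on/near an $x$-monotone curve so that $\opt$ stays $O(1)$, the ellipse-containment lemma forcing new weight near each new pair) matches the paper's, but the central charging step has a genuine gap. You argue that ``every edge it ever placed to service some earlier pair $a'b'$ lies \dots inside the corresponding ellipse $B_{a'b'}$,'' and hence that old edges cannot help inside a fresh active rectangle. That premise is false: the ellipse lemma constrains where a low-stretch $a'b'$-\emph{path} must lie, not where the algorithm may put edges. An online Steiner algorithm is free to lay down arbitrary extra infrastructure anywhere --- in particular, it can preemptively build a fine network along your curve $\gamma_i$ inside every active rectangle it anticipates you will use (your subdivision schedule, unlike your point placement, is essentially oblivious, and even the point placement only avoids \emph{previous ellipses}, not the algorithm's edges). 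When your new pair $a,b$ arrives, such pre-built edges inside $B_{ab}$ serve it at zero marginal cost, and your claim $\alg \ge \Omega(1)\cdot\sum\|ab\|$ collapses. Your argument never charges the algorithm for this pre-building, so constraint (i) as you state it cannot be enforced and the proof does not go through.

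The paper closes exactly this hole with a pigeonhole step that makes the adversary adaptive to the algorithm's \emph{weight}, not just to previously used ellipses: after stage $i$, with $k_i=\lceil\|G_i\|\rceil$, it draws $2k_i+1$ nested $x$-monotone arcs between consecutive current points, obtaining $2k_i$ disjoint regions, and selects a region $R_i$ with $\|G_i\cap R_i\|\le \frac{1}{2k_i}\|G_i\|\le\frac12$. It then places \emph{many} points along a monotone arc $\gamma_i$ inside $R_i$ so that the consecutive ellipses lie in $R_i$ and the forced path weight inside $R_i$ is at least $1$ (the $x$-projections of the forced subpaths are disjoint, so their weights add; no disjointness across stages is needed). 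Thus the algorithm must add at least $\frac12$ new weight per stage regardless of any pre-building --- if it pre-builds everywhere, $\|G_i\|$ is already large and the number of candidate regions grows accordingly --- while the arcs are chosen with total detour $\le 1+(1-2^{-i-1})\eps$ so that $\opt\le 1+\eps$ throughout. Incorporating this weight-dependent pigeonhole choice of the region (and forcing constant weight per stage rather than per pair) is the missing idea you need.
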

\begin{proof}
We describe and analyze an adversarial strategy for placing points in the plane in \emph{stages}. In stage~1, the adversary places two points at $s=(0,0)$ and $t=(1,0)$, both on the $x$-axis. In subsequent stages, new points are arranged so that the optimum solution remains an $x$-monotone path of length at most $1+\eps$ at all times.

Let us denote by $A_i$ the points placed in stage $i$. At the end of stage $i$, adversary constructs the point set $A_{i+1}$ based on the current $(1+\eps)$-spanner built by the algorithm \alg, and then placed the points in $A_{i+1}$ in an arbitrary order.
The objective is that in each stage, \alg\ has to add new edges of total weight at least $1/2$. Since $\opt\leq 1+\eps$ at all times, and $\alg\geq \frac12(i+1)$ after $i$ stages, the competitive ratio goes to infinity.

We describe stage 2 in more detail; subsequent stages are similar; see Fig.~\ref{fig:stages}. At the end of stage~1, our point set is $A_1=\{s=(0,0), t=(1,0)\}$, the optimal spanner is a single edge of unit weight, and \alg\ has constructed a Euclidean Steiner $(1+\eps)$-spanner $G_1$ for $A_1$. Let $k_1=\left\lceil \|G_1\|\right\rceil$. The adversary considers $2k_1+1$ circular arcs between $s$ ad $t$, each of weight at most $1+\frac{\eps}{2}$. The arcs define $2k_1$ interior-disjoint bounded regions. Let $R_1$ be a region that minimizes the weight $\|G_1\cap R_1\|$, in particular, $\|G_1\cap R_1\|\leq \frac{1}{2k_1}\,\|G_1\|\leq \frac12$. In the interior of $R_1$, let $\gamma_1$ be another circular arc between $s$ and $t$, of weight $\|\gamma_1\|\leq 1+\frac{\eps}{2}$; and let $A_2=\{t_1,\ldots , t_N\}$ be a set of points along $\gamma_1$, labeled in $x$-monotone increasing order with the following properties: (1) For every $i=1,\ldots ,N-1$, the ellipse $B_i$ with foci $t_i$ and $t_{i+1}$, and great axis $(1+\eps)\|t_it_{i+1}\|$ lies entirely in $R_1$; and (2) the weight of the $x$-monotone path $(t_1,t_2,\ldots , t_N)$ is at least 1.

\begin{figure}[htbp]
	\centering
 \includegraphics[width=0.95\textwidth]{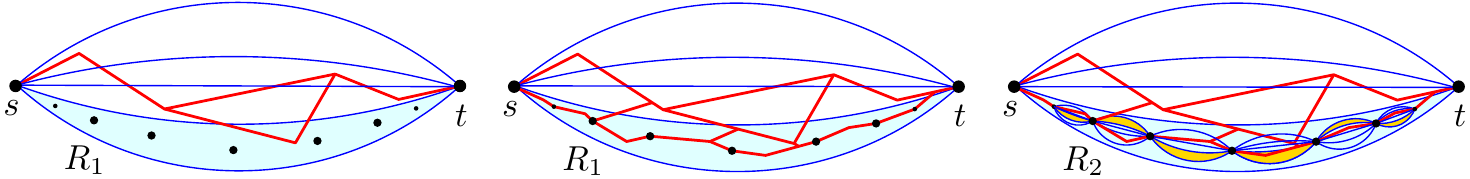}
\caption{Left: For $A_1=\{s,t\}$, $\alg$ constructs a $(1+\eps)$-spanner $G_1$ (red). Five circular arcs define four regions; region $R_1$ satisfies $\|G_1\cap R_1\|\leq \frac14 \|G_1\|$.  In stage~2, the adversary presents points $A_2$ in $R_1$
Middle: The algorithm augments $G_1$ to $G_2$.
Right: Region $R_2$ satisfies $\|G_2\cap R_2\|\leq \frac{1}{2k_2}\|G_2\|$.}\label{fig:stages}
\end{figure}

In stage~2, the adversary presents the points in $A_2$ in an arbitrary order. By the end of stage~2, \alg\ augments $G_1$ to a Euclidean Steiner $(1+\eps)$-spanner $G_2$ for $A_1\cup A_2$. In particular, for every $i=1,\ldots , N-1$, the graph $G_2$ contains a
$t_it_{i+1}$-path of length at most $(1+\eps)\|t_it_{i+1}\|$, which lies in the ellipse $E_i$, hence in the interior of the region $R_1$. The part of the path between the vertical lines passing through $t_i$ and $t_{i+1}$ has weight at least $\|t_it_{i+1}\|$. Since these parts are disjoint, the total weight all $N-1$ paths is lat least $\sum_{i=1}^{N-1}\|t_it_{i+1}\|\geq 1$.  Consequently, $\|G_2\cap R_1\|\geq 1$. Since we had $\|G_1\cap R_1\|\leq \frac12$, \alg\ must have added new edges of weight at least $\frac12$ in stage~2, as claimed.

In phase~$i+1$, in general, let $k_i=\left\lceil \|G_i\|\right\rceil$. Label the points in the current point set $S=\bigcup_{j=1}^i A_j$ by $s_0,\ldots ,s_n$ in $x$-monotone order, and assume that the $x$-monotone path spanned by $S$ has weight $\opt=1+(1-\frac{1}{2^i})\eps$. For all segments $s_js_{j+1}$, we consider $2k_i+1$ $x$-monotone circular arcs such that the total weight of any concatenation of the circular arcs from $s=s_0$ to $t=s_n$ is at most $1+(1-\frac{1}{2^{i+1}})\eps$.
For each segment $s_js_{j+1}$, we choose one of $2k_i$ regions that has a minimum-weight intersection with $G_i$, and let $R_i$ be the union of these regions. Note that $\|G_i\cap R_i\|\leq \frac{1}{2k_i}\,\|G_i\|\leq \frac12$. Let $\gamma_i$ be an $st$-path $\gamma_i$ that connects the points $s_0,\ldots , s_n$ via circular arcs in the region $R_i$, and has weight at most $1+(1-\frac{1}{2^{i+1}})\eps$. Now the adversary can choose a finite point set $A_{i+1}=\{t_1,\ldots , t_N\}$ along $\gamma_i$ with properties (1)--(2) above. This completes the description of the adversarial strategy.

Similarly to stage~2, when \alg\ augments $G_i$ to a Euclidean Steiner $(1+\eps)$-spanner $G_{i+1}$ for $\bigcup_{j=1}^{i+1} A_j$, he must add new edges of weight at least $\frac12$ in the region $R_i$. It follows that the competitive ratio for any online algorithm goes to infinity as $n$ goes to infinity.
\end{proof}


\section{Lower Bounds for Spanners in $\R^d$ under the $L_1$ Norm}\label{sec:lower-bound-L1}

In this section, we study the online $(1+\eps)$-spanners for points in $\mathbb{R}^d$, under the $L_1$ norm. Here the distance between any pair of points is the sum of absolute difference between the coordinates in all dimensions. For instance in $\mathbb{R}^2$, the distance between points $p=(p_1,p_2)$ and $q=(q_1,q_2)$ is the rectilinear distance  between them, i.e., $|p_1 - q_1|+|p_2 - q_2|$.

\begin{figure}[htbp]
    \centering
    \includegraphics[width=\textwidth]{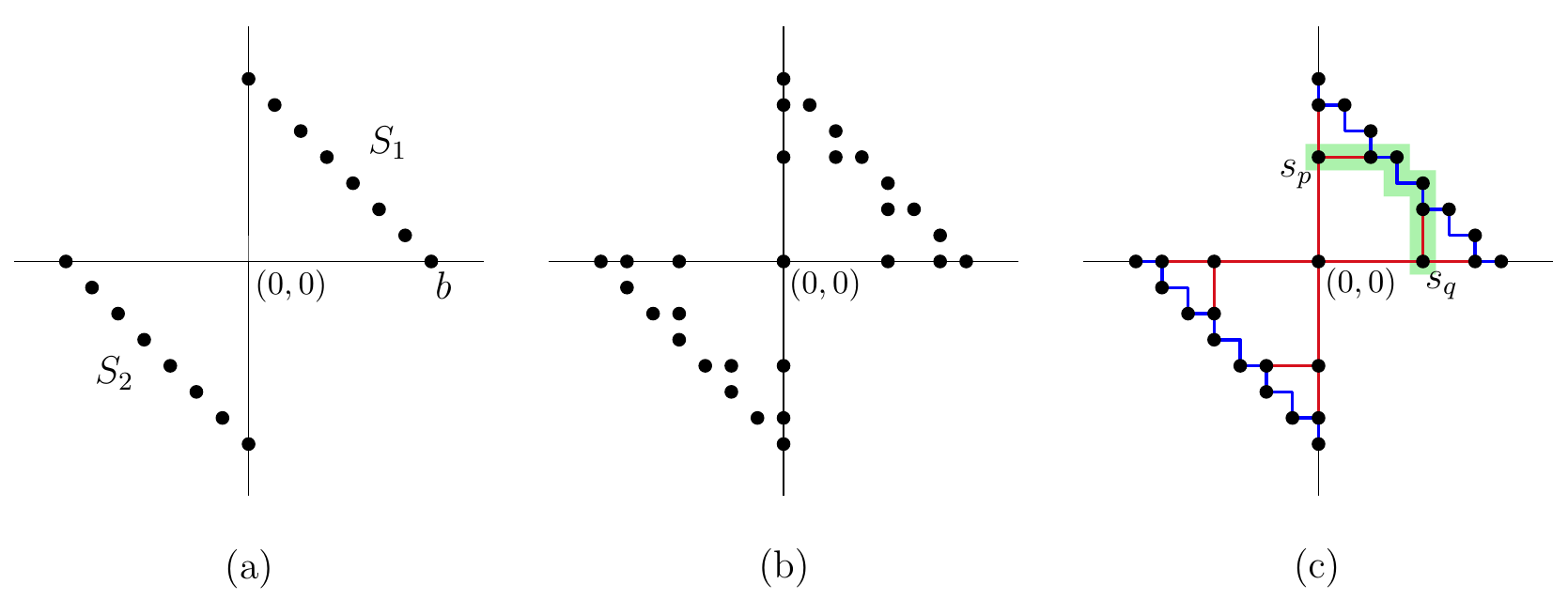}
    \caption{(a) The point sets $S_1$ and $S_2$ in two opposite quadrants (b) The entire point set is $(S_1\cup \widehat{S}_1)\cup (S_2\cup \widehat{S}_2)$; (c) A Manhattan network which is $(1+\eps)$-spanner consisting of two binary trees $T_1$ and $T_2$ and two staircase paths $P_1$ and $P_2$. The green path is an $xy$-monotone path between point pairs $s_p$ and $s_q$.}
    \label{fig:l1-spanner}
\end{figure}

\subparagraph{Construction.} First, we describe the construction for points in $\mathbb{R}^2$. We use the following adversarial strategy to build the construction. 

Let $k=\left\lceil \log \eps^{-1}\right\rceil$.
The adversary places $2^{k}$ collinear points in the positive quadrant, and $2^{k}$ points in the negative quadrant such that the points are placed in a quadrant maintain uniform spacing between consecutive points.
Let $S_1=\{s_i: i=0,\ldots,2^{k}-1\}$, where $s_i=(i,2^k-i)$. Then, $S_2$ is a reflected copy of $S_1$ about the origin. 
See Fig.~\ref{fig:l1-spanner}(a) for an illustration.
Next, the adversary introduces a set of additional points $\widehat{S}_1\cup \widehat{S}_2$ in order to reduce the weight of an optimum solution. Consider the set $S_1$. We consider the binary partition of $\{1,\ldots,2^{k}\}$ into intervals, associated with a binary tree. At level~$0$, the root corresponds to the interval $[1,2^{k}]$.
Then, at each level~$j$ we have the intervals
$[i\cdot 2^{k-j}+1, (i+1)\cdot 2^{k-j}]$, for $i=0,\ldots,2^j$. For each such interval, the adversary considers the bounding box of the points lying in this interval, and adds the lower-left corner into $\widehat{S}_1$. Then, $\widehat{S}_2$ is a reflected copy of $\widehat{S}_1$ about the origin. The entire point set is $(S_1\cup \widehat{S}_1)\cup (S_2\cup \widehat{S}_2)$;
see Fig.~\ref{fig:l1-spanner}(b)


\subparagraph{Competitive Ratio.}
For the point set $(S_1\cup \widehat{S}_1)\cup (S_2\cup \widehat{S}_2)$, we describe a Manhattan network which is a $(1+\eps)$-spanner. The network is comprised of the paths $P_1$ and $P_2$ and the two binary trees $T_1$ and $T_2$ on the point set $(S_1\cup \widehat{S}_1)$ and $(S_2\cup \widehat{S}_2)$, respectively; see Fig.~\ref{fig:l1-spanner}(c). First, we argue that why it is a Manhattan network.
In order to show this, we need to argue that for each pair of points in $(S_1\cup \widehat{S}_1)\cup (S_2\cup \widehat{S}_2)$, there is a rectilinear $xy$-monotone path of optimum length comprised of horizontal and vertical segments.

Consider a pair of points $s_p,s_q \in (S_1\cup \widehat{S}_1)$.
Without loss of generality, assume that $y(s_p)>y(s_q)$. If, $s_q$ is a descendent of $s_p$ in $T_1$, then we are done since there is an unique $xy$-monotone path between them in $T_1$. Otherwise, we take the left path form $s_p$ to the leaf point in $S_1$ and the right path from $s_q$ to the leaf point in $S_1$, and the sub-path from $P_1$ between these leaf points. These paths together from a $xy$-monotone rectilinear path; see Fig.~\ref{fig:l1-spanner}. Moreover, if both $s_p$ and $s_q$ are leaves, then we have a $xy$-monotone path between them which is a subpath of $P_1$.
The same arguments holds for any point pair in $(S_2\cup \widehat{S}_2)$ for which we have the complete binary tree $T_2$ and the path $P_2$. Now, what is left two show that there is an $xy$-monotone path between any pair of points $s_p,s_q$, where $s_p\in (S_1\cup \widehat{S}_1)$ and $s_q\in (S_2\cup \widehat{S}_2)$. For this, we consider the unique $xy$-monotone path from $s_p$ to the root in $T_1$ and $xy$-monotone path from $s_q$ to the root in $T_2$. We take the union of them, which is a $xy$-monotone rectilinear path between $s_p$ and $s_q$.

The weight of the Manhattan network is the weight of the two trees $T_1$ and $T_2$ and the two paths $P_1$ and $P_2$. The distance between the root and $S_1$ is $\min\{|i|+|2^k-i|: i=0,\ldots , 2^k-1\}=2^k$. At each level~$j$, we construct $2^j$ segments. Each segment at level $j$ is of weight $2^{k-j}$. Summation over all levels yields $O(k\,2^k)=O(\eps^{-1}\log \eps^{-1})$.

For the point set $S_1\cup S_2$, every $(1+\eps)$-spanner contains a complete bipartite graphs between $S_1$ and $S_2$. Indeed, the $L_1$-distance between any two points in $S_1$ (resp., $S_2$) is 2 or more, and the $L_1$ distance between any points in $S_1$ and $S_2$ is exactly $\text{dist}(S_1,S_2)=2(2^k-1)$. Hence $(1+\eps)\|s_p s_q\|\leq (1+2^{-k})\cdot 2(2^k-1)< \text{dist}(S_1,S_2)+2$. However, the weight of any $s_p s_q$-path via a third point in $S_1\cup S_2$ would be at least
$\text{dist}(S_1,S_2)+2$.
Therefore, $\alg \geq \Omega(|S_1|\cdot |S_2|\cdot \text{dist}(S_1,S_2)) =\Omega(\eps^{-3})$ for of any online algorithm.
Contrasted with the upper bound $\opt\leq O(\eps^{-1}\log\eps^{-1})$ for $(S_1\cup \widehat{S}_1)\cup (S_2\cup \widehat{S}_2)$, the competitive ratio is $\alg/\opt\geq \Omega(\eps^{-2}/\log \eps^{-1})$. We can conclude the following theorem.

\begin{theorem}
For every $\eps>0$, the competitive ratio of any online algorithm for $(1+\eps)$-spanners in $\R^2$ under the $L_1$ norm is $\Omega(\eps^{-2}/\log \eps^{-1})$.
\end{theorem}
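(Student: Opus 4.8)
The construction above is realized by a two-round adversary. Fix $k=\lceil\log\eps^{-1}\rceil$, so $2^k=\Theta(\eps^{-1})$. In round~1 the adversary presents, in arbitrary order, the set $S_1\cup S_2$, where $S_1=\{s_i=(i,2^k-i):0\le i<2^k\}$ and $S_2$ is the reflection of $S_1$ through the origin. In round~2 it presents $\widehat S_1\cup\widehat S_2$, the lower-left corners of the bounding boxes in the dyadic interval hierarchy over $S_1$, together with their reflection. Let $\sigma$ be the resulting sequence and $S=(S_1\cup\widehat S_1)\cup(S_2\cup\widehat S_2)$ its point set. The plan is to bound $\alg(\sigma)$ from below and $\opt(\sigma)$ from above.

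\textbf{Lower bound on $\alg$.} First I would show that after round~1 every $(1+\eps)$-spanner (without Steiner points) of $S_1\cup S_2$ must contain the edge $s_ps_q$ for every $s_p\in S_1$ and $s_q\in S_2$. By construction the $L_1$-distance between two distinct points of $S_1$ (or of $S_2$) is at least $2$, while $\|s_ps_q\|=\dist(S_1,S_2)=2(2^k-1)$ for all $s_p\in S_1$, $s_q\in S_2$; hence any $s_ps_q$-path through a third point of $S_1\cup S_2$ has weight at least $\dist(S_1,S_2)+2$, which exceeds $(1+\eps)\|s_ps_q\|$ by the choice of $k$, so the straight edge $s_ps_q$ is forced. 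This contributes $|S_1|\cdot|S_2|=2^{2k}$ edges of weight $2(2^k-1)$ each, so the algorithm's spanner after round~1 has weight $\Omega(2^{3k})=\Omega(\eps^{-3})$; since no edge is ever deleted, $\alg(\sigma)=\Omega(\eps^{-3})$.

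\textbf{Upper bound on $\opt$.} Next I would exhibit a single light $(1+\eps)$-spanner for all of $S$, the Manhattan network $M=T_1\cup T_2\cup P_1\cup P_2$ of Fig.~\ref{fig:l1-spanner}(c): $T_r$ is the rectilinear tree on $S_r\cup\widehat S_r$ following the dyadic hierarchy, and $P_r$ is the staircase path through $S_r$, for $r\in\{1,2\}$. For the weight, at level~$j$ of the hierarchy $T_r$ uses $2^j$ axis-parallel segments of length $2^{k-j}$, i.e.\ total length $2^k$ per level over $k+1$ levels; with $\|P_r\|=O(2^k)$ this gives $\|M\|=O(k\,2^k)=O(\eps^{-1}\log\eps^{-1})$. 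For the stretch, I would check that $M$ contains an $xy$-monotone (hence $L_1$-shortest) path between every pair of points of $S$: if both lie in one half $S_r\cup\widehat S_r$, use the monotone paths inside $T_r$ down to two leaves joined by a sub-path of $P_r$ (or just a sub-path of $P_r$ when both are leaves, or the ancestor–descendant path in $T_r$ otherwise); if one lies in each half, concatenate the monotone path to the root of $T_1$ with the one to the root of $T_2$, which stays monotone in both coordinates because the two halves occupy opposite quadrants. Hence $M$ is a $1$-spanner for $S$, a fortiori a $(1+\eps)$-spanner, so $\opt(\sigma)\le\|M\|=O(\eps^{-1}\log\eps^{-1})$.

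\textbf{Conclusion and main obstacle.} Combining the two estimates gives $\alg(\sigma)/\opt(\sigma)\ge\Omega(\eps^{-3})/O(\eps^{-1}\log\eps^{-1})=\Omega(\eps^{-2}/\log\eps^{-1})$, which proves the theorem. The elementary parts are the round~1 distance computations and the geometric-series weight bound; the only delicate step is the stretch verification for $M$, where one must confirm that the explicit coordinates and the ``lower-left corner'' choice in the hierarchy make every tree edge axis-parallel and every concatenation above genuinely $xy$-monotone — a routine but somewhat fiddly case analysis. It is also worth noting \emph{why} the bound does not reduce to a Steiner-ratio comparison: $M$ is simultaneously (essentially) the optimal Steiner spanner for $S_1\cup S_2$ and a non-Steiner $1$-spanner for the enlarged set $S$, so revealing $\widehat S_1\cup\widehat S_2$ in round~2 neither raises $\opt$ nor undoes the $\Omega(\eps^{-3})$ that the algorithm was forced to pay in round~1.
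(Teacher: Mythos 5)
Your proposal is correct and follows essentially the same argument as the paper: the identical two-round construction with $S_1,S_2$ and the dyadic corner points $\widehat S_1,\widehat S_2$, the forced complete bipartite graph giving $\alg=\Omega(\eps^{-3})$, and the Manhattan network $T_1\cup T_2\cup P_1\cup P_2$ of weight $O(\eps^{-1}\log\eps^{-1})$ bounding $\opt$. The only (shared, harmless) loose end is the choice $k=\lceil\log\eps^{-1}\rceil$, where taking the floor makes the inequality $(1+\eps)\,\dist(S_1,S_2)<\dist(S_1,S_2)+2$ clean; this affects nothing asymptotically.
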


\subparagraph{Construction in $\mathbb{R}^d$.}
For a given $\eps>0$, let $k=\left\lceil\log \eps^{-1}\right \rceil$. First, the adversary introduces point sets $S_1$ and $S_2$ on two opposite faces of a cross polytope: Let $S_1$ be the set of all points with non-negative integer coordinates in the hyperplane $H_1:\sum_{i=1}^d x_i=2^k-2$, and let $S_2$ be the reflected image of $S_1$ in the origin.  Note that $|S_1|=|S_2|=\Theta(2^{k(d-1)})=\Theta(\eps^{1-d})$, and the bounding box of $S_1$ is $Q=[0,2^k-2]^d$. Next, the adversary introduces a set of additional points $\widehat{S}_1\cup \widehat{S}_2$ in order to reduce to weight of an optimum solution. Consider the quadtree $\mathcal{Q}_1$ for $S_1$, which partitions $Q$ into congruent cubes and recurses on all nonempty subcubes. Let $\widehat{S}_1$ be the set of all vertices of the cubes in the quadtree; and let $\widehat{S}_2$ be the reflected image of $\widehat{S}_1$ in the origin. By construction, all points in
$(S_1\cup \widehat{S}_1) \cup (S_2\cup \widehat{S}_2)$ have integer coordinates.

\subparagraph{Competitive Ratio.}
Similarly to the planar construction, every $(1+\eps)$-spanner will contain a complete bipartite graph between $S_1$ and $S_2$. As $\|s_p s_q\|\geq 2^k=\Omega(\eps^{-1})$ for all $s_p\in S_1$ and $s_q\in S_2$, then $\alg\geq \Omega(\eps^{-1}|S_1|\cdot |S_2|)\geq \Omega(\eps^{1-2d})$.

Next, we give a lower bound for $\opt$ for the entire point set $(S_1\cup S_2)\cup (\widehat{S}_1\cup\widehat{S}_2)$. Let $T_1$ is the graph formed by all edges of the quadtree $\mathcal{Q}_1$, and $T_2$ is the reflected image of $T_1$. Since the depth of $\mathcal{Q}_1$ is $k$, each nonleaf node has $3$ or more children, and the edge lengths of the cubes in $\mathcal{Q}_1$ decrease by factors of $2$. Consequently, the weight $\|T_1\|$ is dominated by the weight of level $k$, which contains $\Theta(|S_1|)=\Theta(2^{k(d-1)})=\Theta(\eps^{1-d})$ unit cubes, hence the total weight of their (unit-length) edges is also $\Theta(\eps^{1-d})$. Overall, we have $\|T_1\|=\Theta(\eps^{1-d})$.

We show that $T_1\cup T_2$ is a Manhattan network for $(S_1\cup S_2)\cup (\widehat{S}_1\cup\widehat{S}_2)$.
Consider first $T_1$. At each level $\ell=0,\ldots , k$ the quadtree $\mathcal{Q}_1$, contains all cubes that intersect the hyperplane  $\sum_{i=1}^d x_i=2^k-2$. These cubes have the same size. For any two vertices of two cubes on level $\ell$ of $\mathcal{Q}_1$, say $s_a\in Q_a$ and $s_b\in Q_b$, let $s'_a$ and $s'_b$ be the points in $H_1$ such that $s_a s_a'$ and $s_b s_b'$ are parallel to the $x_d$-axis. The line segment $s_a' s_b'$ lies in $H_1$, and is covered by quadtree cubes at level $\ell$. The orthogonal projections of these cubes to each coordinate axis is comprises consecutive intervals, hence we can find a Manhattan path between $s_a$ and $s_b$ along the edges of these cubes. Next assume that $s_a\in Q_a$ and $s_b\in Q_b$ are vertices of two cubes at different levels of $\mathcal{Q}_1$. If $Q_a$ and $Q_b$ are in ancestor-descendant relation, we can find a Manhattan path between $s_a$ and $s_b$ by tracing the edges created by the recursive subdivision. Otherwise, $Q_a$ and $Q_b$ are interior-disjoint. Assume that the side-length of $Q_a$ is less than that of $Q_b$; and let $Q'_b\subset Q_b$ be a quadtree cube on the same level as $Q_a$, and closest to $Q_a$. Then we can find a Manhattan path from $s_a$ to $s_b$ as a concatenation of two Manhattan paths via a vertex of $Q_b'$. Finally, for two vertices in $T_1$ and $T_2$, resp., we obtain a Manhattan path by concatenating two Manhattan paths via the origin, which is a vertex of both $T_1$ and $T_2$.

Since $T_1\cup T_2$ is a Manhattan network for  $(S_1\cup S_2)\cup (\widehat{S}_1\cup\widehat{S}_2)$, then $\opt\leq \|T_1\|+\|T_2\|=\Theta(\eps^{1-d})$.
Combined with $\alg\geq  \Omega(\eps^{1-2d})$, this yields
a lower bound of $\Omega(\eps^{1-2d}/\eps^{1-d})=\Omega(\eps^{-d})$ for the competitive ratio of any online algorithm in $\R^d$ under the $L_1$ norm.
The following theorem summarizes our results in dimensions $d\geq 3$.

\begin{theorem}
Let $d\geq 3$. For every $\eps>0$, the competitive ratio of any online algorithm for $(1+\eps)$-spanners in $\R^d$ under the $L_1$ norm is $\Omega(\eps^{-d})$.
\end{theorem}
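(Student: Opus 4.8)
The plan is to realize the theorem as the output of a two-round adaptive adversary in $\R^d$ (an online algorithm $\alg$ without Steiner points). In \textbf{round~1} the adversary presents only $S_1\cup S_2$, the two opposite faces of the scaled cross-polytope described above, where $S_1$ consists of the non-negative integer points on the monotone hyperplane $H_1\colon\sum_i x_i=2^k-2$ with $k=\Theta(\log\eps^{-1})$, so $|S_1|=|S_2|=\Theta(\eps^{1-d})$. In \textbf{round~2} the adversary adds the quadtree vertices $\widehat S_1\cup\widehat S_2$ (of the quadtree $\mathcal Q_1$ on $S_1$ and its mirror image) as genuine input points. Since $\alg$ may only insert edges, it suffices to lower-bound $\alg$ using round~1 and to upper-bound $\opt$ on the final instance; their ratio will be $\Omega(\eps^{-d})$.

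\textbf{Round~1 forces a biclique.} I would first show that any $(1+\eps)$-spanner on the point set $S_1\cup S_2$ must contain every edge $s_ps_q$ with $s_p\in S_1$ and $s_q\in S_2$. The key facts are that all such cross-pairs have the \emph{same} $L_1$ distance $\dist(S_1,S_2)=2(2^k-2)=\Theta(\eps^{-1})$ (because $S_2=-S_1$ and all coordinates of $S_1$ are non-negative), whereas two distinct points inside $S_1$ (or inside $S_2$) differ in at least two coordinates and so are at $L_1$ distance at least $2$. Hence any $s_ps_q$-path through a third input point has length at least $\dist(S_1,S_2)+2$, while $(1+\eps)\,\|s_ps_q\|_1<\dist(S_1,S_2)+2$ provided $\eps\,2^k\le 1$; this is the one place the value of $k$ matters, and I would take $k=\lfloor\log\eps^{-1}\rfloor$ (which changes $|S_i|$ only by a constant). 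Therefore $\alg$ must already in round~1 insert all $|S_1|\cdot|S_2|$ cross-edges, and cannot later remove them, so $\alg\ge|S_1|\cdot|S_2|\cdot\dist(S_1,S_2)=\Omega(\eps^{1-d})\cdot\Omega(\eps^{1-d})\cdot\Omega(\eps^{-1})=\Omega(\eps^{1-2d})$.

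\textbf{Bounding $\opt$ on the final instance.} I would exhibit a light Manhattan network for $(S_1\cup\widehat S_1)\cup(S_2\cup\widehat S_2)$, which in the $L_1$ metric is automatically a $1$-spanner, hence a $(1+\eps)$-spanner: take $T_1$ to be the union of all edges of the cubes of $\mathcal Q_1$ and $T_2$ its mirror image, noting every point of the instance is a vertex of some quadtree cube. For the weight, the cube side lengths halve per level and $\mathcal Q_1$ has depth $k$ with growing level sizes, so $\|T_1\|$ is dominated by its deepest level, which consists of $\Theta(|S_1|)=\Theta(\eps^{1-d})$ unit cubes; thus $\|T_1\|=\|T_2\|=\Theta(\eps^{1-d})$ and $\opt\le\Theta(\eps^{1-d})$. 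The substantive part is the Manhattan-path property of $T_1\cup T_2$: for vertices $u,v$ lying in cubes $Q_u,Q_v$ of $\mathcal Q_1$ I would argue by cases — (a) $Q_u,Q_v$ on the same level, where the level-$\ell$ cells meeting $H_1$ form a staircase whose coordinatewise projections are intervals, so a monotone lattice path along their edges exists; (b) ancestor–descendant cubes, where one traces the recursive subdivision; (c) interior-disjoint cubes on different levels, obtained by concatenating two paths of types (a)/(b) through a vertex of a level-$\min$ descendant of the larger cube nearest the smaller one; and (d) $u\in T_1$, $v\in T_2$, concatenated through the origin, a common vertex. Each path is an $L_1$-geodesic, so $T_1\cup T_2$ is indeed a $(1+\eps)$-spanner for the final instance. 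Combining, the competitive ratio is $\alg/\opt\ge\Omega(\eps^{1-2d})/\Theta(\eps^{1-d})=\Omega(\eps^{-d})$.

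\textbf{Main obstacle.} I expect case~(a) of the Manhattan-path property to be the delicate step: one must verify that cutting the monotone hyperplane $\sum_i x_i=2^k-2$ at a fixed dyadic resolution produces a staircase of cells with no ``gap'' in any coordinate projection, so that their union of edges carries a monotone lattice path between any two of their vertices. Once this tiling lemma is in place, case~(c) is a routine gluing argument, the weight estimate is a standard geometric-series quadtree bound, and the only arithmetic constraint on the construction is $\eps\,2^k\le1$ together with $|S_i|=\Theta(\eps^{1-d})$, both satisfied by $k=\Theta(\log\eps^{-1})$.
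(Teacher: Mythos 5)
Your proposal is correct and follows essentially the same route as the paper: the same two-phase adversary ($S_1\cup S_2$ first, then the quadtree vertices $\widehat S_1\cup\widehat S_2$), the same biclique-forcing argument giving $\alg=\Omega(\eps^{1-2d})$, and the same quadtree Manhattan network $T_1\cup T_2$ of weight $\Theta(\eps^{1-d})$ bounding $\opt$, with the identical case analysis for the Manhattan-path property. Your only deviations are small refinements (taking $k=\lfloor\log\eps^{-1}\rfloor$ so that the strict inequality $(1+\eps)\,2(2^k-2)<\dist(S_1,S_2)+2$ holds, and spelling out the detour argument that the paper only references from the planar case), which tighten rather than change the paper's argument.
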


\begin{remark}
Recall from the discussion in Section~\ref{sec:intro} that there are point sequences for which the insertion of a point may reduce the weight of the spanner (Fig.~\ref{fig:online-intro}). So, the weight of the optimum solution need not increase monotonically in the number of points. In this section, this phenomenon leads to a significant improvement over the weight of an optimum spanner under $L_1$ norm, and helped to design an improved lower bound for the competitive ratio. We do not know whether this phenomenon can be exploited to produce improved lower bounds under the $L_2$-norm.
%
\end{remark}

\section{Conclusions}
\label{sec:con}

We have studied online spanners for sequences of points in $\mathbb{R}^d$, in fixed dimensions $d\geq 1$, under $L_2$ and $L_1$ norms. We established a tight bound of $\Theta(\eps^{-1}\log n / \log \eps^{-1})$ for the competitive ratio of any online $(1+\eps)$-spanner algorithms on a real line (Theorem~\ref{thm:1D-bounds}).
However it remains an open problem to close the gap between the lower and upper bounds in $\R^d$, for $d\ge 2$. Under the $L_2$ norm, previously known algorithms achieve  competitive ratio $O(\eps^{-(d+1)}\log n)$ (Theorem~\ref{thm:L2woSteiner}).
The best lower bound we are aware of holds for $d=1$. It is unclear whether the lower bound can be improved to $\eps^{-\omega(d)}\log n$ for $d\geq 2$.

Next, we have
showed that, if an online algorithm is allowed to use Steiner points, it can achieve a substantially better competitive ratio in terms of $\eps$, namely $O(\eps^{(1-d)/2} \log n)$, for a sequence of $n$ points in $\R^d$ and any constant $d\ge 2$, under the $L_2$ norm (Theorem~\ref{thm:L2withSteiner}).
As a counterpart, we proved that any online spanner algorithm for a sequence of $n$ points in $\mathbb{R}^d$ under $L_2$ norm has competitive ratio $\Omega(f(n))$, where $\lim_{n\rightarrow \infty}f(n)=\infty$ (Theorem~\ref{thm:w-SP-LB}).
It remains an open problem whether the competitive ratio depends on $\eps$ for Euclidean Steiner spanners. Another open problem is whether the factor $\log n$ in the upper bounds can be reduced, e.g., to $\log n/\log \log n$; similar to the work by Alon and Azar~\cite{AlonA93} who established such a lower bound for Euclidean minimum Steiner trees (EMST) for $n$ points in $\R^2$.

We have established a lower bound  $\Omega(\eps^{-d})$ for the competitive ratio under the $L_1$-norm in $\mathbb{R}^d$. It is unclear whether it can be improved by a $\log n$ factor in dimensions $d\geq 2$. Designing online algorithms that match these bounds under the $L_1$ norm is left for future research.

In online spanner algorithms, the decisions are irrevocable, which means that once an edge is added to the spanner by an online algorithm, it can never be deleted. However, if some of the decisions are reversible, better bounds may be possible. This model is commonly known as \emph{online algorithms with recourse}~\cite{gu2016power, imase1991dynamic, megow2016power}. In $1$-dimension, for instance, an optimum spanner is just a monotone path connecting the points in linear order, and any online algorithm that is allowed to remove at least one edge at per iteration can maintain such a path. In higher dimensions, however, it is unclear whether a $O(1)$-approximation of the minimum-weight $(1+\eps)$-spanner can be maintained with $O(\eps^{-d+1})$ recourse.



\bibliography{online}


\end{document}